\newtheorem{theorem}{Theorem}[section]
\theoremstyle{remark}
\newtheorem{remark}{Remark}
\DeclareMathOperator{\E}{\mathbb{E}}
\newcommand{\code}[1]{\texttt{#1}}
\newcommand{\R}{{\normalfont\textsf{R }}{}}
\newcommand{\lsq}{\left[}
\newcommand{\rsq}{\right]}
\newcommand{\lbc}{\left \{ }
\newcommand{\rbc}{\right \} }
\newcommand{\lp}{\left(}
\newcommand{\rp}{\right)}
\newcommand{\cond}{{\, \vert \,}}
\newcommand{\independent}{\perp\!\!\!\perp}
\newcommand{\YCF}{Y^{\text{CF}}}
\begin{document}

\begin{frontmatter}
\title{A flexible sensitivity analysis approach for unmeasured confounding with multiple treatments and a binary outcome with application to SEER-Medicare lung cancer data\thanksref{t1}}
\runtitle{Sensitivity analysis with multiple treatments}
\thankstext{T1}{Liangyuan Hu,  Department of Biostatistics and Epidemiology, Rutgers School of Public Health, Piscataway, NJ 08854, USA. Email:liangyuan.hu@rutgers.edu}

\begin{aug}
\author[A]{\fnms{Liangyuan} \snm{Hu}\ead[label=e1]{liangyuan.hu@rutgers.edu}},
\author[B]{\fnms{Jungang} \snm{Zou}\ead[label=e2]{jz3183@cumc.columbia.edu}},
\author[C]{\fnms{Chenyang} \snm{Gu}\ead[label=e3]{chenyang.gu@analysisgroup.com}},
\author[D]{\fnms{Jiayi} \snm{Ji}\ead[label=e4]{jiayi.ji@mountsinai.org}},
\author[E]{\fnms{Michael} \snm{Lopez}\ead[label=e5]{mlopez1@skidmore.edu}}
\and
\author[F]{\fnms{Minal} \snm{Kale}\ead[label=e6]{minal.kale@mountsinai.org}}
\address[A]{Department of Biostatistics and Epidemiology,
Rutgers University,
\printead{e1}}

\address[B]{Department of Biostatistics,
Columbia University,
\printead{e2}}

\address[C]{Analysis Group, Inc.,
\printead{e3}}

\address[D]{Department of Population Health Science and Policy,
Icahn School of Medicine at Mount Sinai,
\printead{e4}}

\address[E]{Department of Mathematics, Skidmore College,
\printead{e5}}

\address[F]{Department of Medicine,
Icahn School of Medicine at Mount Sinai,
\printead{e6}}

\end{aug}

\begin{abstract}
In the absence of a randomized experiment, a key assumption for drawing causal inference about treatment effects is the ignorable treatment assignment. Violations of the ignorability assumption may lead to biased treatment effect estimates. Sensitivity analysis helps gauge how causal conclusions will be altered in response to the potential magnitude of departure from the ignorability assumption. However, sensitivity analysis approaches for unmeasured confounding in the context of multiple treatments and binary outcomes are scarce.  We propose a flexible Monte Carlo sensitivity analysis approach for causal inference in such settings.  We first derive the general form of the bias introduced by unmeasured confounding, with emphasis on theoretical properties uniquely relevant to multiple treatments. We then propose methods to encode the impact of unmeasured confounding on potential outcomes and adjust the estimates of causal effects in which the presumed unmeasured confounding is removed. Our proposed methods embed nested multiple imputation within the Bayesian framework, which allow for seamless integration of the uncertainty about the values of the sensitivity parameters and the sampling variability, as well as use of the Bayesian Additive Regression Trees for modeling flexibility. Expansive simulations validate our methods and gain insight into sensitivity analysis with multiple treatments. We use the SEER-Medicare data to demonstrate sensitivity analysis using three treatments for early stage non-small cell lung cancer. The methods developed in
this work are readily available in the $\R$ package $\textsf{SAMTx}$.
\end{abstract}

\begin{keyword}
\kwd{Causal inference}
\kwd{ignorability assumption}
\kwd{observational data}
\kwd{Bayesian inference}
\kwd{nested multiple imputation}
\end{keyword}

\end{frontmatter}

\section{Introduction}



\label{sec:intro}
\subsection{Overview and objectives} \label{sec:overview}
 In the absence of a randomized experiment, causal inference methods with observational data  can mimic the equivalence between treatment and control groups  to reduce bias due to measured confounders. Demands for comparative effectiveness research involving multiple treatments (i.e., more than two treatment options) have grown substantially. 
 As a motivating example, an important and emerging cancer research question concerns the comparative effectiveness of three commonly used surgical approaches for treating early stage non-small cell lung cancer (NSCLC) tumors. Open thoracotomy (OT) long stood as the standard surgical procedure. With the advent of minimal invasive techniques, video-assisted thoracic surgery (VATS) and robotic-assisted surgery (RAS) have been  increasingly used. However, there is a lack of direct evidence for the comparative effects of these three surgical approaches from head-to-head randomized controlled trials, partially due to difficulty in patient recruitment and high study costs. Large-scale healthcare databases collected in real-world settings are potentially fertile ground for generating the desired evidence. For example, \cite{hu2020estimation} used the Surveillance, Epidemiology, and End Results (SEER)-Medicare data to estimate the average treatment effects of the three surgical approaches on postoperative complications. This work found that compared to OT, VATS led to significantly lower chances of respiratory complication, prolonged length of stay (LOS) (i.e., $>$ 14 days) and intensive care unit (ICU) stay,  but there were no statistically significant differences between RAS and VATS.

Drawing causal inference using observational data, however, inevitably requires  assumptions. A key assumption requires appropriately conditioning on all pre-treatment variables that predict both  treatment and  outcome. The pre-treatment variables are known as confounding variables and this requirement is referred to as the \emph{ignorability} assumption, i.e. no unmeasured confounding \citep{robins1999association, hu2019causal}. This assumption may not be satisfied in real-world data. In our motivating NSCLC example, it has been shown in the literature that pulmonary function is a strong confounder for the treatment effects of the surgical approaches on postoperative complications \citep{ceppa2012thoracoscopic, saito2017impact, VATS5336}. Preoperative pulmonary function test results will guide a clinician to choose an appropriate surgical approach as patients with pulmonary hypertension may have difficulty tolerating changes in venous return with insufflation \citep{VATS5336}; and preoperative pulmonary function also directly predicts the chance of postoperative complications \citep{saito2017impact}. It is also suggested that preoperative physical activity level is  a likely confounder as it affects a clinician's choice of surgical approaches and predicts surgical outcomes following lung cancer resection  \citep{bille2021preoperative}. The SEER-Medicare data set does not include information on preoperative pulmonary function or physical activity level, therefore these two variables are unmeasured confounders and the ignorability assumption is violated in the NSCLC study using SEER-Medicare data.

Appropriate techniques are needed to handle the potential magnitude of departure from the ignorability assumption. One recommended approach is sensitivity analysis  \citep{von2007strengthening}. However, the sensitivity analysis methods are underdeveloped in the context of multiple treatments and binary outcomes. In this article, we propose a flexible sensitivity analysis approach for such settings. We derive the form of the bias in the causal effect estimate when there exists unmeasured confounding and shed light on the bias composition  uniquely pertaining to multiple treatments.  We formulate the impact of unmeasured confounding directly in respect of the potential outcomes, and propose and apply strategies to posit the plausible degrees of impact. We then 
construct the ``corrected'' causal effect estimators using the Bayesian modeling framework to account for the uncertainty about the surmised impact of unmeasured confounding. We conduct a large-scale simulation study to examine our proposed methods. A comprehensive case study applies our methods to the SEER-Medicare data to elucidate how causal conclusions would  change to various degrees of unmeasured confounding about the effects of the three surgical approaches (RAS vs. OT vs. VATS) on four postoperative complication outcomes: respiratory complication, prolonged LOS, ICU stay and 30-day readmission.

This paper is organized as follows: the remainder of Section~\ref{sec:intro} provides additional background of the ignorability assumption and broadly reviews sensitivity analysis approaches. Section~\ref{sec:SA} describes notation, defines a bias formula and proposes methods to study the sensitivity of causal effect estimates to unmeasured confounding. Section~\ref{sec:sim} develops a wide variety of simulation scenarios to examine the operating characteristics of our proposed methods, and presents findings. In Section~\ref{sec:application}, we apply our approach to SEER-Medicare data to study how sensitive the causal effect estimates of the three surgical procedures are to different levels of unmeasured confounding. Section~\ref{sec:discussion} concludes with a discussion. 

\subsection{The ignorability assumption for causal inference using observational data}

In the absence of a randomized experiment, the ignorability assumption is needed for the causal estimand to be identifiable. This assumption, however, can be violated in observational studies. When the treatment assignment is not ignorable, treatment effect estimates may be biased. 
Many approaches have been proposed to weaken the reliance on the ignorability assumption in observational studies by using quasi-experimental designs and natural experiments.  However, these methods come with their own sets of assumptions (e.g. ignorability of the instrument for the instrumental variables approach),  and violations of these assumptions can lead to biased effect estimates. In addition, it is arguably difficult to find data that meet the criteria laid out by these methods while addressing the research question of primary interest \citep{dorie2016flexible}. 

One widely recognized way to address concerns about violations of ignorability is via a sensitivity analysis.  In fact, the Strengthening the Reporting of Observational Studies in Epidemiology (STROBE) guidelines recommend research in observational settings be accompanied by sensitivity analysis  investigating the impact of potential unmeasured confounding \citep{von2007strengthening}. Sensitivity analysis is important to elucidating the ramifications of the ignorability assumption by assessing the degree to which unmeasured confounding can alter the causal conclusions.

\subsection{Broad overview of existing sensitivity analysis approaches}

In a sensitivity analysis, the effects of unmeasured confounding are encoded in one or more unidentifiable numerical parameters -- commonly referred to as sensitivity parameters \citep{gustafson2018}. In a sensitivity analysis, we assume a plausible set of values for the sensitivity parameters and combine them with the information observed in the data to provide an adjusted inference about a target parameter. 

One common approach models the effect of unmeasured confounders $U$ on causal conclusions. In such a sensitivity analysis, one introduces $U$ respectively into the outcome model and the treatment assignment model, and varies the association  of $U$, via regression coefficients (sensitivity parameters), with the outcome and the treatment. How the causal effect estimates change with different values of sensitivity parameters quantifies the sensitivity of the causal inferences about treatment effects to the potential magnitude of departure from the ignorability assumption. This approach is referred to as the ``tabular'' method  \citep{gustafson2018}  or the external adjustment approach \citep{kasza2017assessing}. 
There is  considerable disagreement on the modality of $U$, e.g., whether to view $U$ as continuous or binary \citep{rosenbaum1983assessing,hu2017modeling}, or whether $U$ is univariate or multidimensional \citep{lin1998assessing, imbens2003sensitivity}, or whether there is no interaction between the effects of $U$ and the treatment on the outcome \citep{rosenbaum1983assessing, imbens2003sensitivity}. 
A more general external adjustment approach allows $U$ to be multidimensional and of various data types, and does not assume independence between $U$ and measured confounders \citep{ding2016sensitivity}. However, in exchange for the relaxation of assumptions, a large number of parameters must be considered across strata of measured confounders, sacrificing interpretability.

Bayesian approaches have also been proposed in the vein of external adjustment via $U$.  A joint prior distribution is placed over the sensitivity parameters to model beliefs about a possible unmeasured confounding mechanism. The Bayesian updating is applied via Markov chain Monte Carlo (MCMC), to combine the effects of prior distributions and the data. The advantage of a Bayesian sensitivity analysis is that the posterior inference about a target parameter incorporates uncertainty about the values of the sensitivitiy parameters. Examples of Bayesian sensitivity analysis approaches include \cite{daniels2008missing,  gustafson2018, hogan2014bayesian}.  A ``near-Bayesian'' approach, Monte Carlo sensitivity analysis,  has attracted a lot of attention in recent years, for it is intuitive to understand and does not require extensive Bayesian computation. 
Examples of this approach include \cite{greenland2005multiple, lash2011applying, mccandless2017comparison}.
Neither of the two probabilistic sensitivity analysis methods forgoes the introduction of a hypothetical unmeasured confounder $U$, and consequently the issues around assumptions about the underlying structure of $U$ remain.  


Another body of sensitivity analysis literature fits into the \emph{confounding function} paradigm. Pioneered by \cite{robins1999association}, this approach directly 
describes the net confounding on the mean potential outcomes without any reference to $U$, using the between-group difference in the average potential outcomes within levels of measured confounders.  The confounding function approach is related to the external adjustment approach in that the potential outcomes can be considered as the ultimate unmeasured confounder $U$.  The known information of the potential outcomes and treatment renders other unmeasured confounders superfluous because the observed outcome is a deterministic function of the treatment and the potential outcome \citep{brumback2004sensitivity}.  This approach has gained wide attention, particularly in the epidemiology literature \citep{brumback2004sensitivity, li2011propensity, kasza2017assessing}.  By contrast with the ``tabular'' method where a hypothetical $U$ is introduced into both the treatment assignment and outcome models, this approach is preferred when there is no sufficient domain knowledge to inform the magnitude of association of $U$ with both treatment and outcome,  and  when the primary interest is in understanding the entirety of the effect of \emph{all} unmeasured confounding.


Existing sensitivity analysis approaches have largely focused on a \emph{binary} treatment.  \cite{vanderweele2011unmeasured} derived a general class of bias formulas for categorical or continuous treatments. However, in order to achieve interpretability,  a binary $U$ needs to be hypothesized with a constant effect on the outcome across treatment and covariate levels.
Moreover, as a ``tabular'' method,  the uncertainty about the sensitivity parameters is not incorporated into the sensitivity analysis.  

\section{Sensitivity analysis for unmeasured confounding}\label{sec:SA}

\subsection{Notation, definitions and assumptions}
We base our approach on the potential outcomes or counterfactural framework \citep{rubin1974estimating}.  Consider a sample of $N$ units, indexed by $i=1, \ldots, N$, drawn randomly from a target population, and the causal effect of treatment $A$ on a binary outcome $Y$, where $A$ takes values $a$ from a set of $J$ possible treatments indexed by $\mathcal{A} = \{a_1, a_2,  \ldots, a_J\}$, and $Y=1$ indicates events.  The number of units receiving treatment $a_j$ is $n_{a_j}$, where $\sum_{j=1}^J  n_{a_j}=N$. For each individual $i$, there is a vector of pre-treatment measured covariates, $X_i$.  Let $Y_i$ be the observed outcome of the $i$th individual and $\{Y_i(a_1), \ldots, Y_i(a_J)\}$ the potential outcomes under each treatment option of $\mathcal{A}$. The generalized propensity scores for treatment assignment sum to 1, $\sum_{j=1}^J \mathbb{P}(A_i=a_j \cond X_i) =1$. For each individual, at most one of the potential outcomes is observed, i.e., the observed outcome is equal to the potential outcome when the treatment had been set to what  in fact was for the individual. Formally, $Y_i = \sum_{j=1}^J Y_i(a_j) \mathbbm{1}(A_i=a_j)$. 

The  standard identifying assumptions for causal inference with observational data are:
\begin{itemize}
\item [(A1)] SUTVA: The stable unit treatment value assumption. The set of potential outcomes for an individual does not depend on the treatments received by other individuals. 
\item[(A2)] Overlap: The generalized propensity score for treatment assignment is bounded away from 0 and 1. That is, $0<\mathbb{P}(A =a\cond X) <1$. 
\item[(A3)]  Ignorability:  Treatment assignment is conditionally independent of the
potential outcomes given the covariates, $Y(a) \independent A \cond X$. 
\end{itemize}
In this paper,  we deal with the situation where assumption (A3) breaks, while maintaining assumptions (A1) and (A2). 
The causal effects are summarized by estimands, broadly defined as the contrast between functionals of the individual-level potential outcomes on a common set of individuals.  For binary outcomes, common causal estimands are the risk difference, relative risk or odds ratio. We focus on the risk difference  because it has clear interpretability and works well with additive confounding functions (see a brief discussion in Section~\ref{sec:MCSA}). 

Causal estimands of primary interests are average treatment effects defined either over the sample or the population.  Consider a pairwise comparison between treatments $a_j$ and $a_k$ in terms of the risk difference. Common sample estimands are the sample average treatment effect (SATE), $\frac{1}{N} \sum_{i=1}^N [ Y_i(a_j) - Y_i(a_k) ]$. Common population estimands are the population average treatment effect (PATE), $\E[Y(a_j) -Y(a_k)]$. Conditional average treatment effect (CATE) $\frac{1}{N}\sum_{i=1}^N \E \lsq Y_i(a_j)-Y_i(a_k) \cond X_i \rsq$ is another estimand that preserves some of the properties of the previous two \citep{hill2011bayesian}. As our approach is embedded in a Bayesian framework, CATE is a natural estimand to use in this paper \citep{hill2011bayesian}.  By averaging the individual conditional expectation $\E[Y_i(a_j)-Y_i(a_k) \cond X_i]$ across the empirical distribution of $\{X_i\}_{i=1}^{N}$, we obtain the sample marginal effects \citep{hu2020estimation}. Researchers sometimes are also interested in estimating the average treatment effect on the treated (ATT). All the three estimands described above have their ATT counterparts. For purposes of illustration, we focus on CATE in this paper, but our method can be straightforwardly extended for the ATT, by averaging the differenced potential outcomes over those in the reference group. For example, the conditional average treatment effect among those who received treatment $a_j$ $\text{CATT}_{a_j \cond a_j, a_k}$ is $\frac{1}{N_j}\sum_{i:A_i=a_j} \E \lsq Y_i(a_j)-Y_i(a_k) \cond X_i \rsq$, where $N_j = \sum_{i=1}^N I(A_i =a_j)$ is the size of the reference group $a_j$.

\subsection{Bias formulas for average treatment effects with multiple treatments}
Following \cite{robins1999association},  we define the confounding function for any pair of treatments $(a_j, a_k)$ as
\begin{eqnarray} \label{eq:cf}
c(a_j, a_k, x) = \E \lsq Y(a_j) \cond A = a_j, X=x\rsq - \E \lsq Y (a_j) \cond A =a_k, X=x \rsq.
\end{eqnarray} 
The  confounding function (sensitivity parameter) directly represents the difference in the mean potential outcomes $Y(a_j)$ between those treated with $A=a_j$ and those treated with $A=a_k$, who have the same level of $x$.  Under the ignorability assumption, given the measured confounders $X$, the potential outcome is independent of treatment assignment. That is, the individuals who received treatment $A=a_j$ and the individuals who received treatment $A=a_k$ are conditionally exchangeable given measured $X=x$. Had they received the same treatment $a_j$, their mean potential outcomes would have been the same, or $c(a_j, a_k, x) = 0$, $\forall \{a_j, a_k\} \in \mathcal{A}$  ($a_j \neq a_k$ by default). When this assumption is violated and the unmeasured confounding is present, the causal effect estimates will be biased. Theorem~\ref{thm:biasform} shows the general form of the bias in the estimated  causal effects in the multiple treatment setting when treatment assignment is not ignorable. A proof of the theorem is presented in the \hyperref[appn]{Appendix}.
\begin{theorem}\label{thm:biasform}
Consider the pairwise treatment effect between $a_j$ and $a_k \in \mathcal{A} = \{a_1, \ldots, a_J\}$ on the basis of the relative risk.  If $Y(a)  \not\!\perp\!\!\!\perp A |X$, then ignoring unmeasured confounding will give rise to a biased estimate of the causal effect. The bias formula is 
\begin{align} \label{eq:biasform}
\begin{split}
\text{Bias}(a_j,a_k) =&-p_{j} c(a_k, a_j, x) + p_{k}c(a_j,a_k,x)\\
&-\sum\limits_{l: l \in \mathcal{A}\setminus\{a_j, a_k\}} p_{l} \lbc c(a_k, a_l, x) -c(a_j,a_l,x) \rbc, 
\end{split}
\end{align}
where $p_{j} = \mathbb{P}(A= a_j \cond X= x)$, $j \neq k \in \{1, \ldots, J\} $.
\end{theorem}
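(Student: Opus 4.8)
The plan is to obtain the bias as the gap between the naive pairwise effect one would report under a (false) assumption of ignorability and the true conditional pairwise effect, and then to re-express that gap entirely through the confounding functions of~\eqref{eq:cf}. The estimand of interest is the true contrast $\E\lsq Y(a_j)\cond X=x\rsq-\E\lsq Y(a_k)\cond X=x\rsq$, whereas an analyst who wrongly assumes ignorability identifies it with the within-group contrast $\E\lsq Y(a_j)\cond A=a_j,X=x\rsq-\E\lsq Y(a_k)\cond A=a_k,X=x\rsq$. Taking $\text{Bias}(a_j,a_k)$ to be the latter minus the former, the whole argument reduces to rewriting each marginal counterfactual mean $\E\lsq Y(a_j)\cond X=x\rsq$ in terms of observed within-group means and the quantities $c(\cdot,\cdot,x)$.

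First I would expand $\E\lsq Y(a_j)\cond X=x\rsq$ by the law of total expectation over the observed treatment arms, using the generalized propensity scores $p_l=\mathbb{P}(A=a_l\cond X=x)$, giving $\sum_{l=1}^{J}p_l\,\E\lsq Y(a_j)\cond A=a_l,X=x\rsq$. For each $l$ I would then substitute the rearranged definition~\eqref{eq:cf}, namely $\E\lsq Y(a_j)\cond A=a_l,X=x\rsq=\E\lsq Y(a_j)\cond A=a_j,X=x\rsq-c(a_j,a_l,x)$, so the marginal mean becomes $\E\lsq Y(a_j)\cond A=a_j,X=x\rsq\sum_l p_l-\sum_l p_l\,c(a_j,a_l,x)$. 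Invoking $\sum_l p_l=1$ together with the identity $c(a_j,a_j,x)=0$, this collapses to $\E\lsq Y(a_j)\cond A=a_j,X=x\rsq-\sum_{l\neq j}p_l\,c(a_j,a_l,x)$: the observed group mean less a propensity-weighted aggregate of confounding functions. The same manipulation applied to $a_k$ returns the analogous expression with $j$ replaced by $k$.

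Subtracting the two expansions, the observed within-group contrast cancels against the naive estimator and one is left with $\text{Bias}(a_j,a_k)=\sum_{l\neq j}p_l\,c(a_j,a_l,x)-\sum_{l\neq k}p_l\,c(a_k,a_l,x)$. The last step is purely organizational: peeling the $l=k$ term out of the first sum yields $p_k\,c(a_j,a_k,x)$, peeling the $l=j$ term out of the second yields $-p_j\,c(a_k,a_j,x)$, and the two residual sums over $\mathcal{A}\setminus\{a_j,a_k\}$ combine into $-\sum_{l\in\mathcal{A}\setminus\{a_j,a_k\}}p_l\lbc c(a_k,a_l,x)-c(a_j,a_l,x)\rbc$, which is precisely~\eqref{eq:biasform}; since every $c(\cdot,\cdot,x)$ vanishes under ignorability while the expression is generically nonzero otherwise, this simultaneously records that unmeasured confounding induces bias whenever $Y(a)\not\!\independent A\cond X$. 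I expect the sole obstacle to be careful bookkeeping rather than any analytic subtlety: one must respect the asymmetry of the confounding function—$c(a_j,a_k,x)$ and $c(a_k,a_j,x)$ involve different potential outcomes and are not negatives of one another—track signs when detaching the diagonal terms, and recognize that the residual sum over the remaining arms is exactly the feature absent in the binary case, where it is empty and the formula reduces to the familiar two-treatment bias.
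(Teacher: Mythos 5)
Your proposal is correct, and it rests on the same two ingredients as the paper's proof---defining the bias as the naive within-group contrast minus the true conditional contrast, then applying the law of total expectation over treatment arms together with definition~\eqref{eq:cf}---but your algebraic organization is genuinely different and cleaner. The paper works with the joint difference $\E\lsq Y(a_j)-Y(a_k)\cond X=x\rsq$, peels off the $m=j$ and $m=k$ terms, and then runs an intricate add-and-subtract argument (introducing $\tilde p=1-p_j-p_k$) to force the confounding functions to appear. You instead expand each marginal counterfactual mean separately into the identity $\E\lsq Y(a_j)\cond X=x\rsq=\E\lsq Y\cond A=a_j,X=x\rsq-\sum_{l\neq j}p_l\,c(a_j,a_l,x)$ and subtract; the observed group means cancel, and the final step is pure reindexing of the two confounding-function sums. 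Notably, your intermediate identity is exactly the one the paper derives from scratch inside the proof of Theorem~\ref{thm: Ycorrection}, namely $\E\lsq Y(a_j)\cond a_j,x\rsq-\E\lsq Y(a_j)\cond x\rsq=\sum_{l\neq j}p_l\,c(a_j,a_l,x)$, so your route has the added merit of unifying the two theorems: the bias formula and the outcome correction $\YCF$ fall out of the same per-arm expansion. The only thing to make explicit---you use it silently when you say the within-group contrast ``cancels against the naive estimator''---is the consistency step $\E\lsq Y(a_m)\cond A=a_m,X=x\rsq=\E\lsq Y\cond A=a_m,X=x\rsq$ implied by (A1), which the paper states up front before using it repeatedly; with that noted, your bookkeeping of the asymmetric $c$'s and of the leftover sum over $\mathcal{A}\setminus\{a_j,a_k\}$ is exactly right.
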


Theorem~\ref{thm:biasform} provides a key apparatus for our sensitivity analysis approach. The general bias formula provides  several insights into the estimates of causal effects in the multiple treatment setting. 
\begin{remark}
The bias takes the form of a linear combination of the confounding functions. When the ignorability assumption holds, all the confounding functions are equal to zero, and consequentially there will be no bias in the estimated causal effect. 
\end{remark}

\begin{remark}
The bias depends not only on the pair of treatments of interest $\{a_j, a_k\}$, but also on contrasts between them and all other treatment options. This is a property uniquely pertained to the multiple treatment setting. Note that this is true for both the average treatment effect and the average treatment effect on the treated, as the bias arising from unmeasured confounding is derived at the individual level.
\end{remark}

\begin{remark}
A key component of the bias is the generalized propensity score. This is closely related to the findings in recent causal inference literature  that inclusion of the propensity score in the outcome model formulation could help achieve better bias reduction in the estimated causal effect.  The role of the propensity score becomes more important with strong targeted selection \citep{hahn2020bayesian, hucomment2020}. 
\end{remark}

\subsection{Confounding function adjusted causal effect estimates}           
We propose a flexible Monte Carlo sensitivity analysis approach \citep{mccandless2017comparison} to construct confounding function adjusted causal effect estimators \citep{li2011propensity}.  We first investigate strategies of formulating the confounding function $c$ in equation~\eqref{eq:biasform} to posit the level of unmeasured confounding. We then propose an outcome modeling based approach to adjust the causal effect estimates in which the bias in equation~\eqref{eq:biasform} with presumed $c$ has been ``corrected''. We  leverage Bayesian Additive Regression Trees (BART) for outcome modeling.  BART has gained reputation for its prediction accuracy while still having regularized model for preventing overfitting and the
minimum of tuning \citep{chipman2010bart,hu2020tree,hu2020ranking,hu2021variable}.  When adapted into causal inference, numerous studies have shown that BART based approaches can yield more accurate effect estimates and provide coherent interval estimates  \citep{hill2011bayesian, hu2020estimation,  hu2020rare,hu2021estimatinghet,hu2021estimatingaoe}. For the adjusted inference about treatment effects accounting for unmeasured confounding, We embed nested multiple imputation (MI)  in the Bayesian framework. The combined adjusted causal effects and uncertainty intervals can be estimated by Rubin's formula for nested MI \citep{rubin2003nested, gu2019development}. However, as an anonymous reviewer pointed out, when presuming normality of the posterior distribution is not justifiable, a more appropriate way for inference is by pooling posterior samples across model fits arising from the multiple data sets \citep{zhou2010note}. We adopt this strategy to implement our  sensitivity analysis approach so that the concern as to whether the normality assumption holds is done away with.

\subsubsection{Monte Carlo sensitivity analysis}\label{sec:MCSA}
Given known confounding functions $c$, we can construct the confounding function adjusted estimators by modifying the observed outcome $Y$ and estimating the causal effect via outcome modeling using the adjusted outcome $\YCF$. We will discuss strategies for formulating the $c$ function in Section~\ref{sec: interp}. Theorem~\ref{thm: Ycorrection} proposes an approach to computing $\YCF$.  A proof of the theorem is presented in the \hyperref[appn]{Appendix}.

\begin{theorem} \label{thm: Ycorrection}
Under assumptions (A1) and (A2), the estimation of the causal effect based on the adjusted outcome $\YCF$  effectively removes the bias in equation~\eqref{eq:biasform}. Suppose individual $i$ received treatment $a_j \in \mathcal{A} =\{a_1, \ldots, a_J\}$, then the adjusted outcome for individual $i$  is defined as 
\begin{eqnarray}\label{eq:Ycorrection}
\YCF_i \equiv Y_i - \sum_{l \neq j}^J \mathbb{P}(A_i = a_l\cond X_i=x) c(a_j, a_l,x).
\end{eqnarray}
\end{theorem}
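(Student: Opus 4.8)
The plan is to show that an outcome model fit to the adjusted outcome $\YCF$ recovers the \emph{true} conditional mean potential outcome $\E[Y(a_j)\cond X=x]$ for each treatment $a_j$, so that every pairwise contrast equals the true CATE and the bias in \eqref{eq:biasform} is exactly cancelled. The argument does not invoke (A3): it relies only on (A1) to identify the ``diagonal'' regression and on (A2) so that the propensity scores $p_l=\mathbb{P}(A=a_l\cond X=x)$ are well defined and normalized.

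First I would decompose the target by the law of total expectation, partitioning on the treatment actually received,
\[
\E[Y(a_j)\cond X=x]=\sum_{l=1}^J p_l\,\E[Y(a_j)\cond A=a_l, X=x],\qquad \sum_{l=1}^J p_l=1.
\]
The $l=j$ term is identified from the data: by (A1) and the consistency relation $Y_i=\sum_{m}Y_i(a_m)\mathbbm{1}(A_i=a_m)$, we have $\E[Y(a_j)\cond A=a_j,X=x]=\E[Y\cond A=a_j,X=x]$, which is exactly what an ordinary outcome model targets on the subpopulation receiving $a_j$. For each $l\neq j$ the term is counterfactual, and here I would use the definition of the confounding function \eqref{eq:cf} to write $\E[Y(a_j)\cond A=a_l,X=x]=\E[Y(a_j)\cond A=a_j,X=x]-c(a_j,a_l,x)$. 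Substituting these and using $\sum_l p_l=1$ to factor out the diagonal term yields the key identity
\[
\E[Y(a_j)\cond X=x]=\E[Y(a_j)\cond A=a_j,X=x]-\sum_{l\neq j}^J p_l\,c(a_j,a_l,x).
\]

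This identity is precisely the population analogue of the unit-level correction in \eqref{eq:Ycorrection}. Since the correction term is a fixed function of $x$ (and of the group $a_j$), it passes through the conditional expectation, so $\E[\YCF\cond A=a_j,X=x]=\E[Y\cond A=a_j,X=x]-\sum_{l\neq j}p_l\,c(a_j,a_l,x)=\E[Y(a_j)\cond X=x]$. Thus the model fit to $\YCF$ estimates the true conditional potential-outcome surface for every treatment level, and because CATE averages these surfaces linearly over the empirical distribution of $\{X_i\}$, the differenced estimand is recovered without bias. To close the argument I would form the contrast of the key identity for $a_j$ and $a_k$: the naive contrast $\E[Y\cond A=a_j,X]-\E[Y\cond A=a_k,X]$ exceeds the true CATE by exactly $\sum_{l\neq j}p_l\,c(a_j,a_l,x)-\sum_{l\neq k}p_l\,c(a_k,a_l,x)$, and splitting off the $l=k$ and $l=j$ terms and regrouping reproduces \eqref{eq:biasform} term by term, confirming that the quantity subtracted in $\YCF$ is numerically the bias. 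I expect the main obstacle to be bookkeeping rather than anything conceptual: one must separate the diagonal $l=j$ contribution from the off-diagonal counterfactual terms, keep the arguments of the confounding function correctly oriented (note that $c(a_j,a_l,x)$ and $c(a_l,a_j,x)$ involve different potential outcomes, so no sign identity links them), and verify that the telescoping of the two propensity-weighted sums lands exactly on the three-part structure of Theorem~\ref{thm:biasform}.
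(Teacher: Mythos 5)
Your proposal is correct and follows essentially the same route as the paper's own proof: both apply the law of total expectation over the treatment actually received, use the definition of the confounding function $c(a_j,a_l,x)$ to show that the subtracted term equals $\E[Y(a_j)\cond A=a_j,X=x]-\E[Y(a_j)\cond X=x]=\sum_{l\neq j}p_l\,c(a_j,a_l,x)$, and then verify that the contrast of adjusted outcomes cancels the bias in equation~\eqref{eq:biasform} term by term. Your framing of the key identity as showing $\E[\YCF\cond A=a_j,X=x]=\E[Y(a_j)\cond X=x]$ is a slightly cleaner way to state what the paper's algebra establishes, but it is the same argument.
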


For a binary $Y$, we compute $\YCF$ as in equation~\eqref{eq:Ycorrection} and treat $\YCF$ as continuous for outcome modeling. This is amenable to the causal estimand  based on the risk difference. For ratio-based estimands such as the relative risk, one option is to define the confounding function as $c(a_j,a_k,x) = \log \lbc\frac{\E \lsq Y(a_j)  \cond A=a_j,x \rsq}{\E \lsq Y(a_k) \cond A=a_k,x \rsq}\rbc$, and use a multiplicative correction in the form of $Y   \frac{\E \lsq Y(a_j) \cond A = a_j, x \rsq} {\E \lsq Y (a_k) \cond A=a_k, x \rsq}$. However, this formulation leads to a less clear interpretability of the confounding function.  Moreover, the adjusted outcomes will be 
a mixture of zeros and nonnegative continuous data, posing challenges for outcome modeling. In this paper, we focus on the additive confounding functions. 

\begin{algorithm}[H] 
\caption{Monte Carlo Sensitivity Analysis} \label{alg:MCSA}
\begin{enumerate}
\item Fit a multinomial probit BART model $f^{\text{MBART}}(A\cond X)$ to estimate the generalized propensity scores, $p_l \equiv \mathbb{P} (A =a_l \cond X =x) \;  \forall a_l \in \mathcal{A}$, for each individual. 
\item  
\begin{algorithmic}
\For{$j \gets 1$ to $J$}                    
 \State {Draw $M_1$ generalized propensity scores $\tilde{p}_{l1}, \ldots, \tilde{p}_{lM_1}, \forall l \neq j \wedge a_l \in \mathcal{A}$ from the posterior predictive distribution of $f^{\text{MBART}} (A \cond X)$ for each individual.} 
    \For{$m \gets 1$ to $M_1$}                    
        \State {Draw $M_2$ values $\gamma^*_{lm1}, \ldots, \gamma^*_{lmM_2}$ from the prior distribution of each of the confounding functions $c(a_j, a_l, x)$, for each $l \neq j \wedge a_l \in \mathcal{A}$. } 
    \EndFor
    \EndFor
\end{algorithmic}
\item Compute the adjusted outcomes $\YCF$, for each treatment $a_j$, as in equation~\eqref{eq:Ycorrection} of Theorem~\ref{thm: Ycorrection} for each of $M_1M_2$ draws of $\{\tilde{p}_{l1},  \gamma^*_{l11}, \ldots, \gamma^*_{l1M_2}, \ldots,  \tilde{p}_{lM_1},  \gamma^*_{lM_11}, \ldots,  \gamma^*_{lM_1M_2};  l \neq j \wedge a_l \in \mathcal{A}\}$. 
\item Fit a BART model to each of $M_1\times M_2$ sets of observed data with  the adjusted outcomes $\YCF$, and estimate the combined adjusted causal effects and uncertainty intervals by pooling posterior samples across model fits arising from the $M_1 \times M_2$ data sets.    
\end{enumerate}
\end{algorithm}

We propose a Monte Carlo sensitivity analysis in Algorithm~\ref{alg:MCSA}. Nested MI is used to draw samples for the product term $\mathbb{P}(A_i = a_l\cond X_i=x) c(a_j, a_l,x)$ in equation~\eqref{eq:Ycorrection}.  
The generalized propensity scores are drawn from a multinomial probit BART model, and the $c$ values are drawn from user-specified functions. The $M_1 \times M_2$ nested MI are used to construct $M_1 \times M_2$ complete data sets.  
For the estimation of the causal effect, we fit a BART model to each of the $M_1 \times M_2$ data sets with the adjusted outcome $\YCF$.  In the multiple treatment setting, the quantity of interest is the pairwise treatment effect, $Q_{a_j,a_k} = \text{CATE}_{a_j, a_k}, \forall \{a_j, a_k\} \in \mathcal{A}$. Following the estimation procedure described in \cite{hu2020estimation}, we compute, for each data set, the posterior mean causal effect $Q^{m_1,m_2}_{a_j, a_k} = \widehat{\text{CATE}}_{a_j, a_k}^{m_1,m_2}, \forall \{a_j, a_k\} \in \mathcal{A}$,
$m_1 =1, \ldots, M_1, m_2 =1, \ldots, M_2$. The overall estimates of the causal effect $\bar{Q}_{a_j, a_k}$ and sampling variance $\bar{U}_{a_j, a_k}$ are obtained as the posterior mean and variance of the pooled posterior samples $\{Q^{m_1,m_2}_{a_j, a_k}: a_j, a_k \in \mathcal{A}, m_1 =1,\ldots, M_1, m_2=1,\ldots, M_2 \}$ across the $M_1 \times M_2$ data sets \citep{zhou2010note}. When implementing our approach in simulations (Section~\ref{sec:sim}), we used $M_1=10$ and $M_2=10$ for the large sample size $N=10,000$ and $M_1=30$ and $M_2=30$ for the smaller sample size $N=1500$. In the case study (Section~\ref{sec:application}), we used $M_1=30$ and $M_2=30$.

\subsubsection{Priors for sensitivity parameters}\label{sec: interp}
 There are three ways in which we can specify the prior for the confounding functions $c(\cdot)$: (i) point mass prior; (ii) re-analysis over a range of point mass priors (tipping point); (iii) full prior with uncertainty specified.  Our approach uses strategy (iii), offering an advantage of allowing the incorporation of uncertainty about unidentified components ($c$ functions) of the model formally into the analysis. 
 To  surmise a reasonable prior for $c(\cdot)$, we follow  three guidelines. 
 \begin{enumerate}[(1)]
 \item  Subject-matter expertise should  be leveraged to assume a plausible distribution for the $c(\cdot)$ to encode our prior beliefs about the possible direction and magnitude of the effect of unmeasured confounding.  Following \cite{robins1999association} and \cite{brumback2004sensitivity}, $c(\cdot)$ can be specified as a scalar parameter or as a functional of the measured covariates $X$. 
 \item With a binary outcome, the $c(\cdot)$ describes the difference in two probabilities, providing the  \emph{natural} bounds of $[-1,1]$.
 \item The bounds of the prior for $c(\cdot)$ can be further reduced by assuming that  the entirety of unmeasured confounding accounts for less than a certain units of the remaining standard deviation unexplained by $X$ \citep{hogan2014bayesian}.  
 \end{enumerate}
 
 Using our motivating NSCLC example (Section~\ref{sec:overview}), three treatment options are $A=1$: RAS, $A=2$: OT and $A=3$: VATS. The outcome events are postoperative respiratory complications. Consider a scalar parameter for $c(1,3,x)$ and $c(3,1,x)$ comparing RAS with VATS. We assume the unmeasured factors (e.g. lung function test results and physical activity level) guiding clinicians to make a treatment decision tend to lead them to recommend RAS over VATS to healthier patients. This assumption is based on prior domain knowledge, which suggests that a patient who is high risk for OT is also high risk for RAS \citep{VATS5336}, and that VATS will remain as the gold standard for lung cancer surgery for its proven safety and feasibility \citep{sihoe2020video}. This assumption implies $c(1,3,x) <0$ and $c(3,1,x) >0$, i.e., relative to those treated with VATS,  patients who were assigned to RAS, on average, will have lower potential likelihood of experiencing postoperative respiratory complications to both RAS and VATS. This establishes the  upper (lower) bound of the range of values for the sensitivity parameter $c(1,3,x)$ ($c(3,1,x)$). For the other bound,  we first estimate the remaining standard deviation unexplained by $X$ (treating binary $Y$ as continuous), $\hat{\sigma}^2$. Then assume that $c(1,3,x) > -h\hat{\sigma}$ and $c(3,1,x) < h\hat{\sigma}$, i.e.,  unmeasured confounding would account for less than $h$ units of the remaining standard deviation unexplained by measured confounders $X$. In conjunction with the natural bounds of $[-1,1]$, we can assume the support of the distribution is $[0, \min (h\hat{\sigma}, 1) ]$ for $c(1,3,x)$ and $[\max (-h\hat{\sigma}, -1 ), 0 ]$ for $c(3,1,x)$. Other plausible assumptions about the $c(\cdot)$'s and their interpretations are presented in Table~\ref{tab:cfun-inter}. Section~\ref{sec:application} describes detailed considerations for specifying plausible priors for the confounding functions when conducting a sensitivity analysis  using the SEER-Medicare NSCLC data.
 
 \begin{table}[H]
\centering
\caption{Interpretation of assumed priors on $c(a_1, a_2,x)$ and $c(a_2,a_1,x)$ for causal estimands based on the risk difference. Postoperative complications are used as an example for the outcome.}
\label{tab:cfun-inter}
\begin{tabular}{ccp{0.8\textwidth}}
\toprule
\multicolumn{2}{c}{Prior assumption}  & Interpretation and implications of the assumptions\\
$c(a_1,a_2,x)$ &$c(a_2,a_1,x)$ &\\\hline
$>0$ &$<0$ &  Individuals treated with $a_1$ will have higher potential probability of experiencing postoperative complications to both $a_1$ and $a_2$  than individuals treated with $a_2$; i.e. unhealthier individuals are treated with $a_1$. \\
$<0$ &$>0$ & Contrary to the above interpretation, healthier individuals are treated with $a_1$. \\
$<0$ &$<0$ & The potential postoperative complication probability to $a_1$ ($a_2$) is lower among those who choose it than among those who choose $a_2$ ($a_1$).  Thus, the observed treatment allocation between these two approaches is beneficial relative to the alternative which reverses treatment assignment for everyone. \\
$>0$ &$>0$ & Contrary to the above interpretation, the observed treatment allocation between these two approaches is undesirable relative to the alternative which reverses treatment assignment for everyone.\\ 
\bottomrule
\end{tabular}
\end{table}

\section{Simulation study}\label{sec:sim}
We conduct a series of simulations to evaluate the operating characteristics of our proposed sensitivity analysis approach. The data generating processes cover a wide variety of scenarios motivated by the data structures of the SEER-Medicare NSCLC registry. We first carry out an illustrative simulation to empirically verify our sensitivity analysis approach and demonstrate  properties uniquely relevant to  multiple treatments. In the second set of simulations, we contextualize  sensitivity analysis  in the multiple treatment settings with varying sample sizes,  degrees of unmeasured confounding and sparsity levels of covariate overlap. Three treatment levels $\mathcal{A} = \{1, 2, 3\}$ are used throughout. The estimand is   CATE based on the risk difference. The formula~\eqref{eq:Ycorrection} in Theorem~\ref{thm: Ycorrection} reduces to 
\begin{equation} \label{eq:YCF-sim}
 \YCF= \begin{cases}
      Y - p_2 c(1,2,x) -p_3 c(1,3,x) \hspace{5mm} \text{if } A =1 \\
      Y - p_1 c(2,1,x) - p_3 c(2,3,x)\hspace{5mm} \text{if } A =2 \\
      Y - p_1 c(3,1,x) - p_2 c(3,2,x)  \hspace{5mm} \text{if } A =3 \\
            \end{cases}.
\end{equation}
Note that for each pairwise treatment effect,  four confounding functions are involved. We replicate each simulation scenario 1000 times.  To judge the appropriateness of our sensitivity analysis approach, we use the bias and root mean squared error (RMSE). In addition, we examine the coverage probabilities of our confounding-function-adjusted causal effect estimates under three levels of  covariate overlap: strong, moderate and weak overlap.

\subsection{An illustrative simulation}\label{sec:illus-sim}
We considered a total sample size $N=1500$ with a balanced treatment allocation (the ratio of units = 1:1:1) across three treatment groups, a binary confounder $X_1 \sim \text{Bernoulli} (0.4)$, and a binary unmeasured confounder $U \sim \text{Bernoulli} (0.5)$. Both treatment assignment and outcome generating mechanisms depend on $X_1$ and $U$, but only $X_1$ is observed. The treatment assignment followed a multinomial distribution 
$A \cond X_1, U \sim \text{Multinomial} (N, \pi_1, \pi_2, \pi_3),$ where $\pi_{1}=\frac{e^{l_1}}{e^{l_1}+e^{l_2}+e^{l_3}},\pi_{2}=\frac{e^{l_2}}{e^{l_1}+e^{l_2}+e^{l_3}},\pi_{3}=\frac{e^{l_3}}{e^{l_1}+e^{l_2}+e^{l_3}}$, $l_{1}  =0.2X_1+0.4U+N(0,0.1)$, $l_{2}  =-0.3X_1+0.8U+N(0,0.1)$, and $l_{3}  =0.1X_1+0.5U+N(0,0.1)$.  The $N(0,0.1)$ was used to generate measurement errors, and the coefficients of $X_1$ and $U$ were chosen to create a balanced treatment allocation as well as reasonably good covariate overlap \citep{hu2020estimation}. We generated three sets of nonparallel response surfaces -- models for the outcome (or potential outcomes) conditional on the treatment
and confounding covariates \citep{hill2011bayesian},
\begin{align*}
\mathbb{P} \lp Y (1) =1 \mid X_1, U \rp& =\text{logit}^{-1}(-0.8X_1-1.2U+1.5X_1U)\\
\mathbb{P} \lp Y (2) =1 \mid X_1, U \rp & =\text{logit}^{-1}(-0.6X_1+0.5U+0.3X_1U)\\
\mathbb{P} \lp Y (3) =1 \mid X_1, U \rp & =\text{logit}^{-1}(0.3X_1+1.3U+0.2X_1U).
\end{align*}
The parameters were chosen to induce similar outcome event probabilities as the probabilities of respiratory complication and  ICU stay observed in three treatment groups of the SEER-Medicare data (Section~\ref{sec:application}). We demonstrate our sensitivity analysis approach in a realistic scenario where the unmeasured confounder and the measured confounder may have an interaction effect on the outcome. Under this simulation configuration, the observed outcome event probability was 0.40 in $A=1$, 0.51 in $A=2$ and 0.64 in $A=3$ and the true $\text{CATE}_{1,2}=-0.16$, $\text{CATE}_{1,3}=-0.29$ and $\text{CATE}_{2,3}=-0.13$. In Web Figure 1, we show that all results and conclusions hold for a simplified situation where there does not exist an interaction between $X_1$ and $U$. 

 With three treatments, there are six sensitivity parameters, $c(1, 2,x_1)$, $c(1,3,x_1)$, $c(2,1,x_1)$, $c(2,3,x_1)$, $c(3,1,x_1)$ and $c(3,2,x_1)$.   Suppose we do not get to observe $U$, from the simulated data we know, for any $a_1 \neq a_2 \in \{1,2,3\}$, the true $c(a_1, a_2,x_1) =\E [Y(a_1) \cond A =a_1, X_1=x_1] - \E [Y(a_1) \cond A =a_2, X_1=x_1] \; \forall x_1 \in \{0,1\}$. Ignoring effect modification by $X_1$, we can calculate the true $c^0(a_1,a_2) = \E [Y(a_1) \cond A =a_1] - \E [Y(a_1) \cond A =a_2] $. The residual standard deviation of the model including measured covariate $X_1$ is $\hat{\sigma}=0.48$. 
When implementing our sensitivity analysis approach, we formulated the confounding function $c(\cdot)$ in two ways. In one way,  $c(\cdot)$ was specified as a scalar parameter and in another way, $c(\cdot)$ was described within each stratum of $X$. In each consideration for the functional form of $c(\cdot)$, we investigated the performance of our sensitivity analysis approach using the following four strategies for specifying the prior distribution on $c(\cdot)$: 
\begin{enumerate}[(I)]
\item The true confounding function $c^0$. This is intended for empirically verifying  our proposed sensitivity analysis approach.  
\item A uniform distribution on the interval centered around $c^0$, $\mathcal{U} \lp \max(-1, c^0 - h\hat{\sigma}), \min(1, c^0+h\hat{\sigma}) \rp$.
\item Shifting the uniform distribution in (II) away from the truth $c^0$, $\mathcal{U} \lp \max(-1, c^0 - 2h\hat{\sigma}),  c^0 \rp$, or $\mathcal{U} \lp c^0, \min(1, c^0+2h\hat{\sigma}) \rp$. We refer to this setup as \emph{moving the goal posts}. 
\item A uniform distribution with the natural bounds $\mathcal{U}(-1,1)$. This case makes the most noninformative assumptions about the sensitivity parameters. 
\end{enumerate}

In cases (II)-(IV), other distributions can be assumed in place of the uniform distribution. We argue that for a sensitivity analysis, the range of plausible values for the sensitivity parameters is more important than the distributional shape of the sensitivity parameters. We show in Web Figure 2 that for the sensitivity parameters, the truncated normal distribution yielded similar sensitivity analysis results as the uniform distribution.

Figure~\ref{fig: binary-sim} displays the estimates of three pairwise causal  effects among 1000 replications corresponding to each of four strategies (I)-(IV). Along with the sensitivity analysis results, estimates ignoring $U$ and including $U$ are also presented. In case (I) where the true $c(\cdot, \cdot, x_1)$ functions were specified, our sensitivity analysis approach yielded near-zero biases in the CATE estimates, similar to the results that could be achieved if $U$ were actually observed.  These results provide empirical evidence for the validity of our sensitivity analysis approach. Using a scalar parameter for the $c$ function moderately increased the biases. 
For exposition purposes, we considered a scenario \emph{``I: 3rd $A$ ignored''} , in which only the $c$ functions involving the pair of treatments for the pairwise CATE of interest were considered, while the other two $c$ functions involving the third treatment were ignored. For example, suppose we were interested in estimating $CATE_{1,2}$. Using this strategy,  $c(1,2,x_1)$ and $c(2,1,x_1)$ were considered while $c(1,3, x_1)$ and $c(2,3,x_1)$ involving $A=3$ were set to zero. The results show that ignoring the third treatment that is not in the target pairwise CATE, even if we use the true confounding functions for the other two treatments, the adjusted CATE estimates would still be biased. The finding attests to Theorem~\ref{thm:biasform} and highlights the importance of simultaneous consideration of all treatments  for causal inference with multiple treatments. Web Table 2 summarizes the average absolute bias and RMSE in each of the estimates of treatment effects for each sensitivity analysis strategy, conveying the same messages as Figure~\ref{fig: binary-sim}. In Web Figure 3, we present the credible intervals of the three pairwise causal effects for each scenario considered using a random data replication. Results from Figure~\ref{fig: binary-sim} and Web Figure 3 are congruent. Under the correct specification of the sensitivity parameters (case I), 
the width of credible intervals for the adjusted CATE is similar to that for the CATE had $U$ been observed. As the uncertainty about the sensitivity parameters grows (Cases II and III), the width of credible intervals also increases.

\begin{figure}[H] 
\includegraphics[width=1\textwidth]{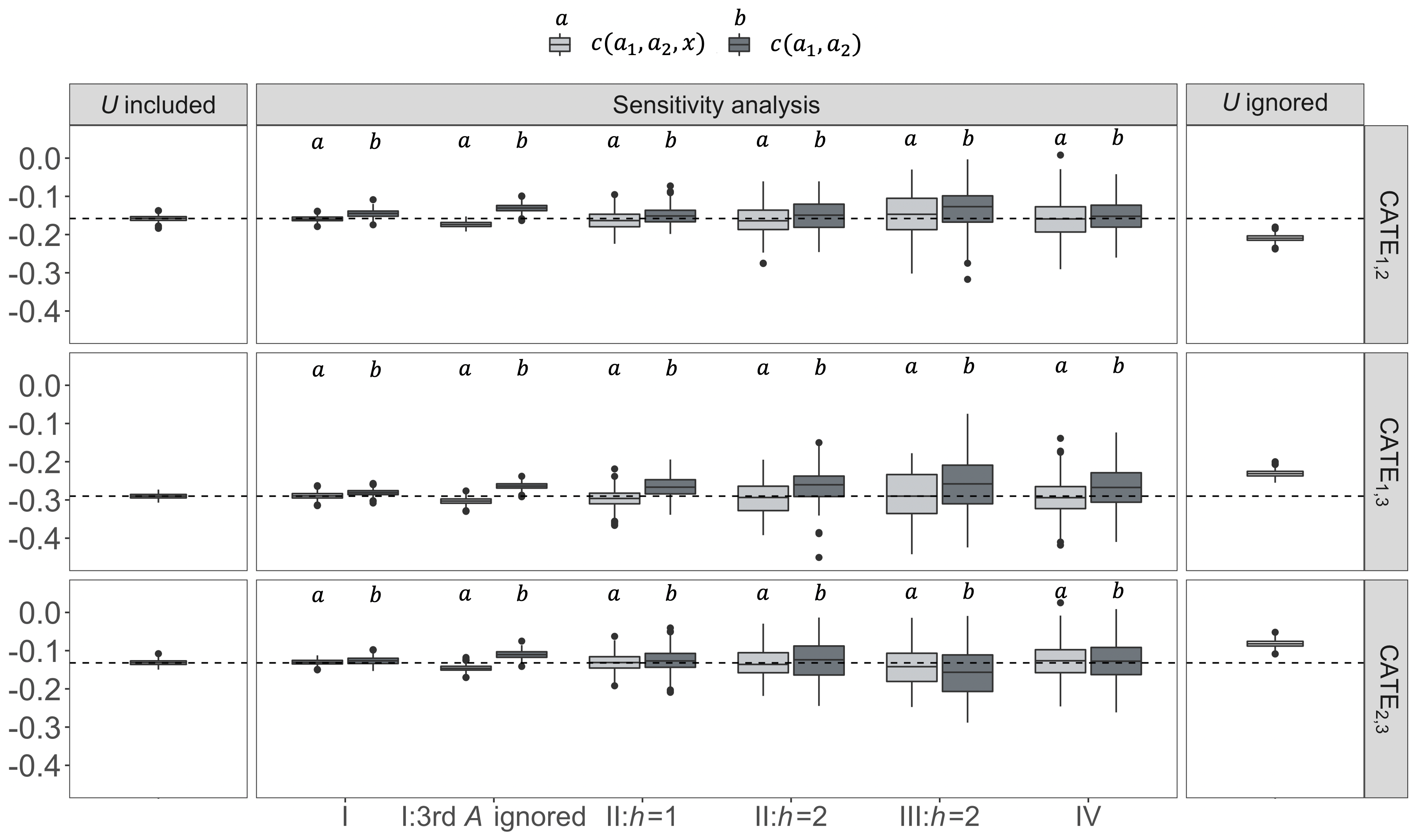}
\caption{Estimates of three pairwise causal effects $\text{CATE}_{1,2}$, $\text{CATE}_{1,3}$ and $\text{CATE}_{2,3}$ among 1000 replications. For sensitivity analysis, strategies (I)-(IV) described in Section~\ref{sec:illus-sim} were used to specify the prior distributions for the confounding functions $c(\cdot)$. For strategy (I),  the scenario ``3rd $A$ ignored'' considers only the $c(\cdot)$ functions involving the pair of treatments for the target CATE, while setting the  $c(\cdot)$ functions involving the third treatment to zero. For strategy (II), both $h=1$ and $h=2$ are considered, representing one and two remaining standard deviation, respectively. The CATE results that could be achieved if $U$ were actually observed and the naive CATE estimators ignoring $U$  are also presented. Red dashed lines mark the true CATE.} 
\label{fig: binary-sim}
\end{figure}

\subsection{Contextualizing simulation in multiple treatment settings}
\subsubsection{Simulation design}\label{sec:complex-sim-design}
We contextualize a set of simulations in the multiple treatment settings, motivated by the SEER-Medicare data used in Section~\ref{sec:application} , with three design factors. The first factor concerns the sample size and ratio of units across treatment groups. We considered (i) $N=1500$ with a 1:1:1 ratio, and (ii) $N=10,000$ with a 1:10:9 ratio, which mimics the sample composition of the three treatment options in the SEER-Medicare NSCLC registry. The second factor is the degree of unmeasured confounding, for which we considered three levels covering low, mid and high magnitude of departure from the ignorability assumption: (i) one unmeasured confounder independent of measured confounders,  (ii) one unmeasured confounder with interactions with two measured confounders, and (iii) two unmeasured confounders with interactions between each other and with two measured confounders. In the SEER-Medicare data, at least two confounders (lung function and physical activity level) \citep{saito2017impact,bille2021preoperative} are not collected, and there may exist interactions among measured and unmeasured confounders. This situation is represented by level (iii).  The third factor is covariate overlap, the degree of which we varied  in three scenarios: strong overlap, moderate overlap and weak overlap. The covariate overlap in the SEER-Medicare data is reasonably strong (Web Figure 5). For the sake of simulation efficiency, we laid this factor into the sample size $N=10,000$ with ratio = 1:10:9 -- the scenario most representative of the SEER-Medicare data.   In this set of simulations, we considered  nonlinear treatment assignment mechanism,  nonparallel and nonlinear response surfaces and partial alignment between predictors of the treatment assignment mechanism and predictors of the response surfaces, which are realistic scenarios representative of the real-world data. 

We now describe the data generating processes and parameter constellations, which are based on the observed information in the SEER-Medicare data. In addition, the parameters in the treatment assignment and outcome generating mechanisms are chosen in parallel to the design factors and to generate similar outcome event probabilities as in the SEER-Medicare data. We simulated 15 covariates $X=(X_1, \ldots, X_{15})$, among which $X_6, X_7, X_8, X_9, \text{ and } X_{10}$ were discrete and others were continuous. The distributions of the covariates are summarized in Web Table 1. The treatment assignment mechanism follows a multinomial logistic regression model, 
$$A \cond X^A \sim \text{Multinomial} \lp N, \pi_1(X^A), \pi_2(X^A), \pi_3(X^A) \rp,$$
where $\pi_j (X^A)= e^{l_j(X^A)}/\sum_{j=1}^3 e^{l_j(X^A)}$, $\log \lp l_j(X^A) \rp = \alpha_j + \gamma \lp X^A \xi^L_j + Q^{A} \xi_j^{NL} \rp$ $\forall j \in \{1,2,3\}$,  $X^A$ is a subspace of $X$ predictive of the treatment, $Q^A$ denotes the nonlinear transformations and higher-order terms of the predictors $X^A$, and $\xi^{L}_j$  and $\xi^{NL}_j$ are respectively vectors of coefficients for $X^A$ and $Q^A$. The intercepts $(\alpha_1, \alpha_2, \alpha_3)$ were used to control the ratio of units across three treatment groups and $\gamma$ was varied to create different levels of covariate overlap. Using data simulated under the 1:10:9 ratio,  we varied $\gamma$ to reflect increasingly sparse covariate overlap among treatment groups.  Web Figure 4 shows the distributions of the true generalized propensity scores across treatment groups.

Three sets of nonparallel response surfaces were generated as follows: 
\begin{eqnarray*}
\mathbb{P}\lp Y(j) =1 \cond X^Y \rp &=& \text{logit}^{-1} \lp \tau_j + X^Y \eta^L_j + Q^Y \eta^{NL}_j \rp \forall j \in \{1, 2, 3\}, 
\end{eqnarray*}
where $X^Y$ is a subspace of $X$ predictive of the outcome (partially aligned with $X^A$), $Q^Y$ is the nonlinear transformations and higher-order terms of  $X^Y$, and $(\eta^{L}_j, \eta^{NL}_j)$ are vectors of coefficients for $(X^Y, Q^Y)$.  The parameters $(\tau_j, \eta_j^L, \eta_j^{NL})$ were set to differ across $j$'s to induce treatment heterogeneity.  We chose the parameter values so that the outcome event rate in each treatment group was similar as the postoperative complication rate observed in the SEER-Medicare data.  The observed outcome event probability is 0.38 in treatment group 1, 0.34 in treatment group 2 and 0.51 in treatment group 3.  The true pairwise $\text{CATE}_{1,2}=0.05$, $\text{CATE}_{1,3}=-0.11$ and $\text{CATE}_{2,3}=-0.16$. 

To create unmeasured confounding, for case (i) we chose $X_4$, independent of all  measured covariates, as an unmeasured confounder. For case (ii) we chose $X_{13}$ as an unmeasured confounder, which had interactions with both $X_{11}$ and $X_{12}$, and for case (iii) we denied access to  $X_{14}$ and $X_{15}$, which had interactions with each other and with both $X_{11}$ and $X_{12}$.

There are a total of 8 configurations for this simulation: \big($N=1500$ with ratio =1:1:1,  $N=10000$ with  ratio = 1:10:9\big) $\times$ \big(UMC (i), UMC(ii), UMC(iii)\big)  under strong covariate overlap $+$ \big($N=10000$ with  ratio = 1:10:9\big) $\times$ \big(weak overlap, moderate overlap\big).

\begin{figure}[H] 
\includegraphics[width=1\textwidth]{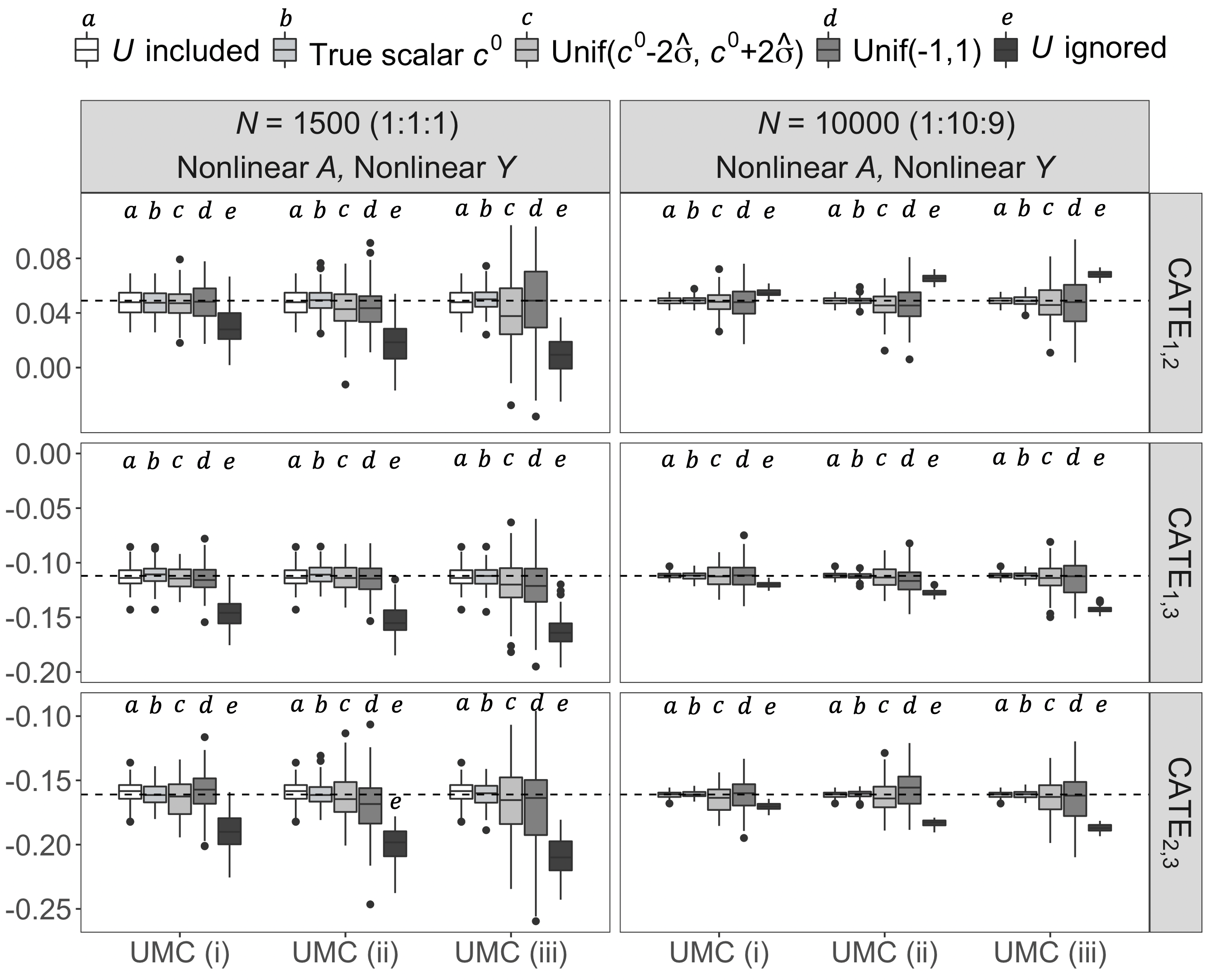}
\caption{Estimates of three pairwise causal  effects $\text{CATE}_{1,2}$, $\text{CATE}_{1,3}$ and $\text{CATE}_{2,3}$ among 1000 replications, for two sample sizes, $N=1500$ with ratio = 1:1:1 and $N=10,000$ with ratio = 1:10:9, and three complexity levels of unmeasured confounding, UMC(i), UMC(ii) and UMC(iii), as described in Section~\ref{sec:complex-sim-design}. For sensitivity analysis, three strategies were used to specify the prior distributions for the confounding functions: 1) true scalar parameters $c^0$, 2) $\mathcal{U}(c^0-2\hat{\sigma}, c^0+2\hat{\sigma})$ bounded within $[-1,1]$, and 3) $\mathcal{U}(-1,1)$.  The CATE results that could be achieved if $U$ were actually observed and the naive CATE estimators ignoring $U$  are also presented. Red dashed lines correspond to the true CATE.} 
\label{fig: complex-sim}
\end{figure}

\subsubsection{Simulation results: Overall assessment}
Figure~\ref{fig: complex-sim} shows the estimates of three pairwise causal treatment effects among 1000 replications for six scenarios generated under a combination of  sample sizes and complexity levels of unmeasured confounding. The covariate overlap is relatively strong in all six data configurations. Using the true scalar parameters for the $c(\cdot)$ functions, our confounding function adjusted estimates of the casual effects are close to the estimates were the treatment assignment ignorable, even under the most complex unmeasured confounding (iii), for all pairwise treatment effects and both sample sizes. 
The effect estimates are more accurate with a larger sample size.  When the uncertainty about the sensitivity parameters increases, the range of adjusted causal effects widens. The smaller sample size is more susceptible to this impact. For example, the whiskers of boxplots for the $\mathcal{U}(-1,1)$ and $\mathcal{U}(c^0-2\hat{\sigma}, c^0+2\hat{\sigma})$ specifications extend into the opposite direction for the $\text{CATE}_{1,2}$ effect in the scenario with the most complex level of unmeasured confounding. The wide breadth of the adjusted estimates of the causal treatment effects suggest that calibrating unmeasured confounding for reliable input by building in subject-matter expertise could be valuable, especially for smaller sample size. The poor performance of the naive estimators ignoring the violation of the ignorability assumption is evident in all scenarios. Web Table 3 summarizes the average absolute bias and RMSE in each of the estimates of treatment effects for each sensitivity analysis strategy and  data configuration. 

\begin{figure}[H] 
\includegraphics[width=1\textwidth]{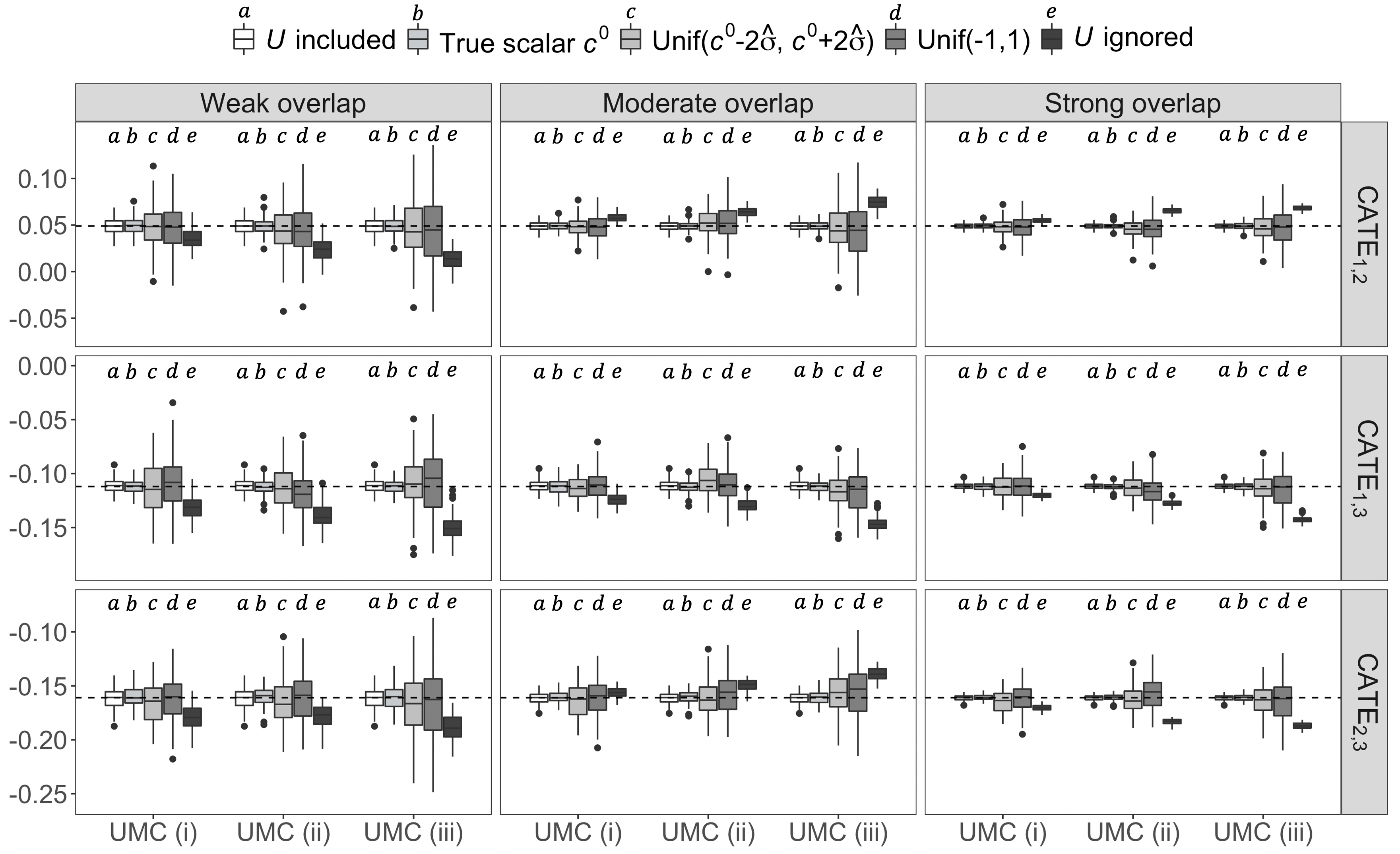}
\caption{Estimates of three pairwise causal treatment effects $\text{CATE}_{1,2}$, $\text{CATE}_{1,3}$ and $\text{CATE}_{2,3}$ among 1000 replications, for the sample size $N=10,000$ with ratio of units = 1:10:9, three complexity levels of unmeasured confounding, UMC(i), UMC(ii) and UMC(iii), and three levels of covariate overlap, strong, moderate and weak,  as described in Section~\ref{sec:complex-sim-design}. For sensitivity analysis, three strategies were used to specify the prior distributions for the confounding functions: 1) true scalar parameters $c^0$, 2) $\mathcal{U}(c^0-2\hat{\sigma}, c^0+2\hat{\sigma})$ bounded within $[-1,1]$, and 3) $\mathcal{U}(-1,1)$.  The CATE results that could be achieved if $U$ were actually observed and the naive CATE estimators ignoring $U$  are also presented. Red dashed lines mark the true CATE.} 
\label{fig: sim-result-overlap}
\end{figure}

\subsubsection{Simulation results: Covariate overlap} 
 Figure~\ref{fig: sim-result-overlap} demonstrates whether and how covariate overlap impacts the operating characteristics of our sensitivity analysis approach under different complexity levels of unmeasured confounding. In all levels of overlap, our sensitivity analysis estimators, assuming the true values of $c$ functions, are close to the causal effect estimates if the igorability assumption holds. When the covariate overlap becomes increasingly sparse, the sensitivity analysis estimators assuming large uncertainties about the sensitivity parameters produce highly variable adjusted causal effect estimates, with the variability in proportion to the complexity level of unmeasured confounding. In the meanwhile, ignoring $U$ could lead to extremely deviated treatment effect estimates. 

\subsubsection{Simulation results: Coverage probability} 
In the situation where there is at least moderate covariate overlap and the sample size is relatively large $N=10,000$, our sensitivity analysis estimators yield close-to-nominal coverage probabilities (Figure~\ref{fig: sim-complex-CP}). When there is a substantial lack of overlap, a specification with a large uncertainty for the sensitivity parameters tends to yield lower coverage probabilities. However, even under the most complex UMC (iii) scenario, the coverage probabilities are around 0.88. On the contrary, naive estimators ignoring unmeasured confounding  would give failed inference with essentially zero coverage probabilities across all configurations. Similar observations are made about the coverage probabilities for the smaller sample size $N=1500$, shown in Web Figure 6. Due to the smaller sample size, the coverage probabilities of all estimators are slightly lower than those for the larger sample size $N=10,000$ across all scenarios.

\begin{figure}[H] 
\includegraphics[width=0.9\textwidth]{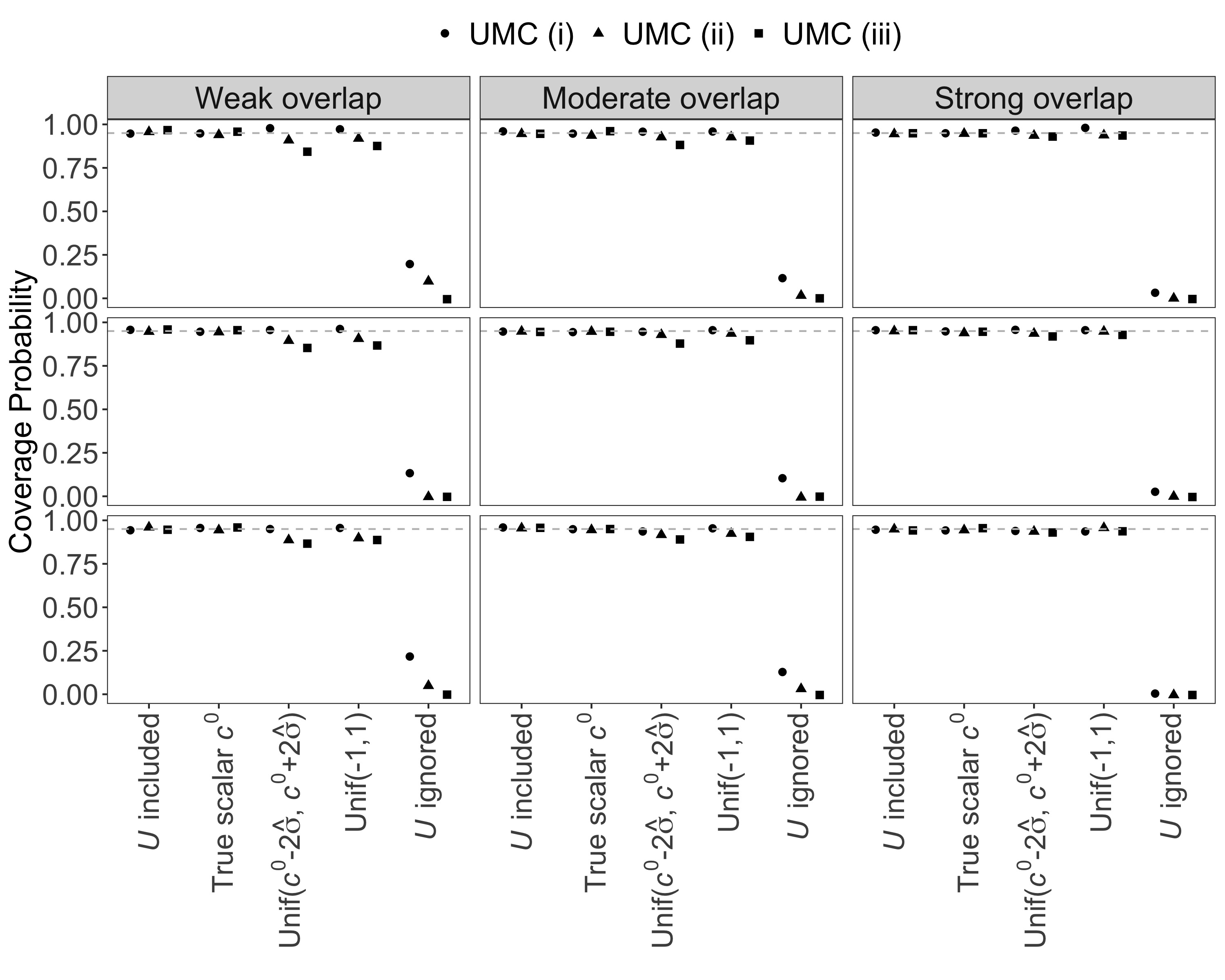}
\caption{The coverage probability of the estimates of three pairwise causal  effects $\text{CATE}_{1,2}$, $\text{CATE}_{1,3}$ and $\text{CATE}_{2,3}$ among 1000 replications, for the sample size $N=10,000$ with ratio of units = 1:10:9, three complexity levels of unmeasured confounding, UMC(i), UMC(ii) and UMC(iii), and three levels of covariate overlap, strong, moderate and weak,  as described in Section~\ref{sec:complex-sim-design}. For sensitivity analysis, three strategies were used to specify the prior distributions for the confounding functions: 1) true scalar parameters $c^0$, 2) $\mathcal{U}(c^0-2\hat{\sigma}, c^0+2\hat{\sigma})$ bounded within $[-1,1]$, and 3) $\mathcal{U}(-1,1)$. Coverage probabilities of the CATE results that could be achieved if $U$ were actually observed and the naive CATE estimators ignoring $U$  are also presented. Gray dashed lines mark the nominal 95\% coverage probability.} 
\label{fig: sim-complex-CP}
\end{figure}

\section{Sensitivity analysis of causal effects of NSCLC surgical approaches}\label{sec:application} 
Our SEER-Medicare  data set includes 11,980 patients older than 65 years, who were diagnosed with stage I--IIIA NSCLC between 2008 and 2013 and underwent surgical resection via RAS ($n=396$),  OT ($n=5002$) or VATS ($n=6582$). The data set contains individual-level information at baseline on the following variables: age, sex,  race, ethnicity, marital status, income level, Charlson comorbidity score, cancer stage, tumor size, tumor site, cancer histology, and whether positron emission tomography, chest computed tomography, or mediastinoscopy was performed.  The postoperative complication outcomes  are (i) the presence of postoperative respiratory complication within 30 days of surgery or during the hospitalization in which the primary surgical procedure was performed; (ii) prolonged LOS; (iii) ICU stay following surgery; and (iv) readmission within 30 days of surgery. The observed respiratory complication rate is 30.1\% in the RAS group, 33.3\% in the OT group and 33.6\% in the VATS group. The proportion of prolonged LOS is observed to be 5.3\% in RAS, 5.5\% in OT and 10.4\% in VATS. For ICU stay, the observed event rate is 60.2\% in RAS, 59.1\% in OT and 75.3\% in VATS. And 8.8\% of patients were readmitted within 30 days of surgery via RAS, 8.0\% via OT and 9.8\% via VATS. Detailed data description can be found in \cite{hu2020estimation}. The covariate overlap is relatively strong in the data (Web Figure 5).

Prior causal inference research using the SEER-Medicare NSCLC data found that VATS led to a significantly smaller respiratory complication rate, lower chance of prolonged LOS and  lower chance of ICU stay than OT, but there was no statistically significant difference between RAS and VATS or between RAS and OT. No surgical approach was significantly better than the others regarding 30-day readmission rate \citep{hu2020estimation}.   Given the possible unmeasured confounders like preoperative lung function and physical activity level \citep{sihoe2020video,bille2021preoperative}, we apply the proposed sensitivity analysis approach to evaluate the sensitivity of the estimated causal effects to the potential magnitude of departure from ignorable treatment assignment. 

We leverage the subject-area literature to specify the prior distributions for the confounding functions. We first consider the signs of the confounding functions.  A large volume of evolving clinical evidence
has confirmed that VATS lung cancer resection offered
proven safety and feasibility and has firmly established VATS
as the surgical approach of choice for early-stage lung
cancer today  \citep{sihoe2020video}. On the other hand, it is recommended that appropriate patient selection is essential for success
of RAS. Severe cardiac and pulmonary disease
should be considered contraindications to robotic surgery
as these patients will not tolerate one-lung
ventilation or changes in venous return \citep{VATS5336}. It has been shown that OT is less safe than VATS \citep{howington2013treatment} and a patient who is high risk for OT is also high risk for RAS \citep{VATS5336}. Based on these pieces of evidence, we posit that the unmeasured factors guiding clinicians to choose an approriate surgical approach tend to lead them to recommend RAS or OT over VATS to healthier patients, that is, $c(1,3,x)<0$, $c(3,1,x)>0$, $c(2,3,x)<0$ and $c(2,3,x)>0$ (Table~\ref{tab:cfun-inter}). There is no sufficient domain knowledge to inform a particular direction for $c(1,2,x)$ or $c(2,1,x)$. We next consider the bounds of the confounding functions. \cite{howington2013treatment} show that preoperative pulmonary lung function has a large effect (Cohen's $d >1.2$) and physical activity level has a small effect (Cohen's $d =0.2$) on postoperative complications. Roughly translating  Cohen's $d$ to generalized eta squared $\eta^2_G$ \citep{lakens2013calculating}, which measures the proportion of the total variation in the outcome explained by a given covariate, the unmeasured confounders approximately account for 30\% of the total variance in postoperative complication outcomes. The $R^2$ of a regression model fitted to the SEER-Medicare NSCLC data suggest that 40\% of the variation in overall postoperative complications are unexplained by measured covariates. Based on these grounds, we establish the bounds for the $c(\cdot)$ functions by assuming that the unmeasured confounding would account for less than $h=0.8$ units of the remaining standard deviation unexplained by measured $X$.

 We used  six strategies to specify the prior distributions on the confounding functions to conduct a comprehensive sensitivity analysis.  Table~\ref{tab:SA-resp} displays the sensitivity analysis results for postoperative respiratory  complication. Prior domain knowledge led us to believe that clinicians tend to prescribe RAS or OT over VATS to healthier patients, and that the unmeasured confounders account for no more than 0.8 units of the remaining standard deviation unexplained by measured confounders, or approximately 0.4 for postoperative respiratory complications. We first look at the effect between RAS and VATS. Under specification (i)-(iii),  the comparative benefit in terms of postoperative respiratory complication of VATS versus RAS becomes statistically significant, as opposed to the conclusion that would have been drawn assuming ignorable treatment assignment (specification [vii]). Under the assumption that unmeasured factors tend to lead clinicians to systematically prescribe RAS to healthier patients relative to VATS, $c(1,3,x)  \sim \mathcal{U} (-0.4,0) \text{ and } c(3,1,x) \sim \mathcal{U}(0, 0.4)$, and that the observed treatment allocation between RAS or OT is beneficial relative to the alternative which reverses treatment assignment for everyone (specification [iii]) regardless of clinician's preference between RAS and OT (specification [i] and [ii]),
the naive estimator ignoring unmeasured confounding (specification [vii]) is biased downwards. The ``bias correction'' in our sensitivity analysis approach raises the potential risk of postoperative respiratory complication associated with RAS, leading to statistically significant benefit of VATS. Turning to the comparative effect of RAS and OT, assuming clinicians tend to systematically prescribe RAS to the healthiest patients (specification [i])  would lead to an altered conclusion that OT is more beneficial than RAS. If we presume that clinicians would prefer to prescribe OT to healthier patients relative to RAS (specification [ii]), then the adjusted causal effect estimates suggest that RAS is more beneficial than OT. Finally, when comparing OT with VATS, assuming the effect of unmeasured confounding between RAS and OT can be either positive or negative (specification [v]), the significant treatment benefit associated with VATS relative to OT under the ignorability assumption is negated. The larger uncertainty about the confounding functions are reflected in the wider uncertainty intervals for the adjusted causal effect estimates. 

In situations where researchers may be interested in assessing the sensitivity of the CATT results to various degrees of unmeasured confounding, we can follow the same steps 1-3 in Algorithm~\ref{alg:MCSA}, and modify step 4 to obtain the adjusted CATT results by averaging the differenced potential outcomes among those in the reference group. Web Table 7 shows sensitivity analysis results for the CATT effects among those who were operated with RAS. The same specifications of the confounding functions were used as for the CATE effects in Table~\ref{tab:SA-resp}. Though the CATT results are less sensitive than the CATE results to violations of the ignorability assumption, demonstrated by the smaller magnitude of changes in the effect estimates, the causal conclusion about the comparative effect of RAS versus VATS changed (from no significant difference to VATS being significantly better) under specifications (i) and (iii). 

\begin{table}[H]
\centering
\caption{Sensitivity analysis for causal inferences about average treatment effects of three surgical approaches on postoperative respiratory complications based on the risk difference, using the SEER-Medicare lung cancer data. Three surgical approaches are $A=1$: robotic-assisted surgery (RAS), $A=2$: open thoracotomy (OT), $A=3$: video-assisted thoracic surgery (VATS).  The adjusted effect estimates and 95\% uncertainty intervals are displayed. Interval estimates are based on pooled posterior samples across model fits arising from $30 \times 30$ data sets.  The remaining standard deviation in the outcome not explained by the measured covariates is $\hat{\sigma}$ = 0.46 . We assume  $c(1,3,x)  \sim \mathcal{U} (-0.4,0), \; c(3,1,x) \sim \mathcal{U}(0, 0.4), \; c(2,3,x) \sim \mathcal{U} (-0.4,0), \; c(3,2,x) \sim \mathcal{U}(0, 0.4)$ for specification (i)-(v). }
\label{tab:SA-resp}
\begin{tabular}{cp{0.42\textwidth}ccc} 
\toprule
&Prior distributions on $c(\cdot)$ functions  & RAS vs. OT & RAS vs. VATS & OT vs. VATS \\\hline
(i)& $c(1,2,x) \sim \mathcal{U} (-0.4,0),  c(2,1,x) \sim \mathcal{U}(0, 0.4)$ &  $.03 (.01,.05)$ & $.05(.03,.07)$ & $.06 (.04,.08)$\\
(ii)&$c(1,2,x) \sim \mathcal{U}(0, 0.4),  c(2,1,x) \sim \mathcal{U} (-0.4,0)$  &  $-.03 (-.05,-.01)$ & $.02(.00,.04)$ & $.02 (.00,.04)$\\
(iii)&$c(1,2,x), c(2,1,x) \sim \mathcal{U} (-0.4,0) $ & $.04(.02,.06)$ & $.04(.02,.06)$ & $.05 (.03,.07)$\\
(iv)&$c(1,2,x), c(2,1,x) \sim \mathcal{U}(0, 0.4)$ & $.00(-.02,.02)$ & $.01(-.01,.03)$ & $.03 (.01,.05)$\\
(v) &$c(1,2,x), c(2,1,x) \sim \mathcal{U}(-0.4, 0.4)$ & $.01(-.02,.04)$ & $.02(-.01,.05)$ & $.03 (-.00,.06)$\\
(vi)&all $c(\cdot) \sim \mathcal{U}(-1, 1)$ & $.01(-.06,.08)$ & $.07(-.00,.14)$ & $.06(-.01,.13)$ \\
(vii)&all $c(\cdot) =0$ & $-.01(-.02,.00)$ & $.01(-.00,.02)$ & $.02(.01,.03)$\\
\bottomrule
\end{tabular}
\end{table}

Web Table 4-6 summarize sensitivity analysis results for prolonged LOS, ICU stay and 30-day readmission under specifications (i)-(vi) of the confounding functions. Plausible assumptions about the sensitivity parameters altered causal conclusions about the treatment benefit of VATS versus RAS (from nonsignificant to significant) in terms of prolonged LOS; and about treatment benefits of VATS versus OT or RAS (from nonsignificant to significant) with respect to 30-day readmission.  Based on ICU stay, the estimates of causal effects are insensitive to various magnitudes of departure from the ignorability assumption.

To provide more granularity to the sensitivity analysis, Figure~\ref{fig: heatmap} disaggregates the sensitivity analysis results for the causal effect of RAS versus VATS by  combinations of four $c(\cdot)$ functions involved in the pairwise treatment effect and displays them in a contour plot \citep{brumback2004sensitivity, li2011propensity,  kasza2017assessing}. As discussed above, we leverage domain knowledge and assume healthier patients receive OT relative to VATS and there is a lack of guidance for specifying the direction of unmeasured confounding between RAS and OT, that is, $c(3,2) \sim \mathcal{U} (0, 0.4)$ and $c(1,2) \sim \mathcal{U} (-0.4, 0.4)$. Subject-area literature provides evidence to support the assumption that relative to VATS,  patients prescribed to RAS tend to be healthier. We examine how the causal conclusion about the effect of RAS versus VATS would change under different pairs of values of $c(1,3,x)$ and $c(3,1,x)$ spaced on a grid $(-0.4,0) \times (0, 0.4)$ and the specifications of $c(1,2,x)$ and $c(3,2,x)$. When there is no unmeasured confounding between RAS and VATS, i.e., $c(1,3,x)= 0$ and $c(3,1,x)=0$, the estimate estimate of $CATE_{1,3}$ suggests that the probability of having postoperative respiratory complications under RAS is .006 lower than that under VATS. When the treatment assignment between RAS and VATS is not ignorable, as the magnitudes of $c(1,3,x)$ and $c(3,1,x)$ increase, i.e. patients prescribed to RAS are healthier, the adjusted effect estimates indicate larger treatment benefit with VATS.
Similar patterns are observed in Web Figure 7 for the other two pairwise treatment effects comparing RAS with OT and OT with VATS, suggesting the sensitivity to potential unmeasured confounding of the causal effect estimates on postoperative respiratory complications. 

 \begin{figure}[H] 
\includegraphics[width=0.7\textwidth]{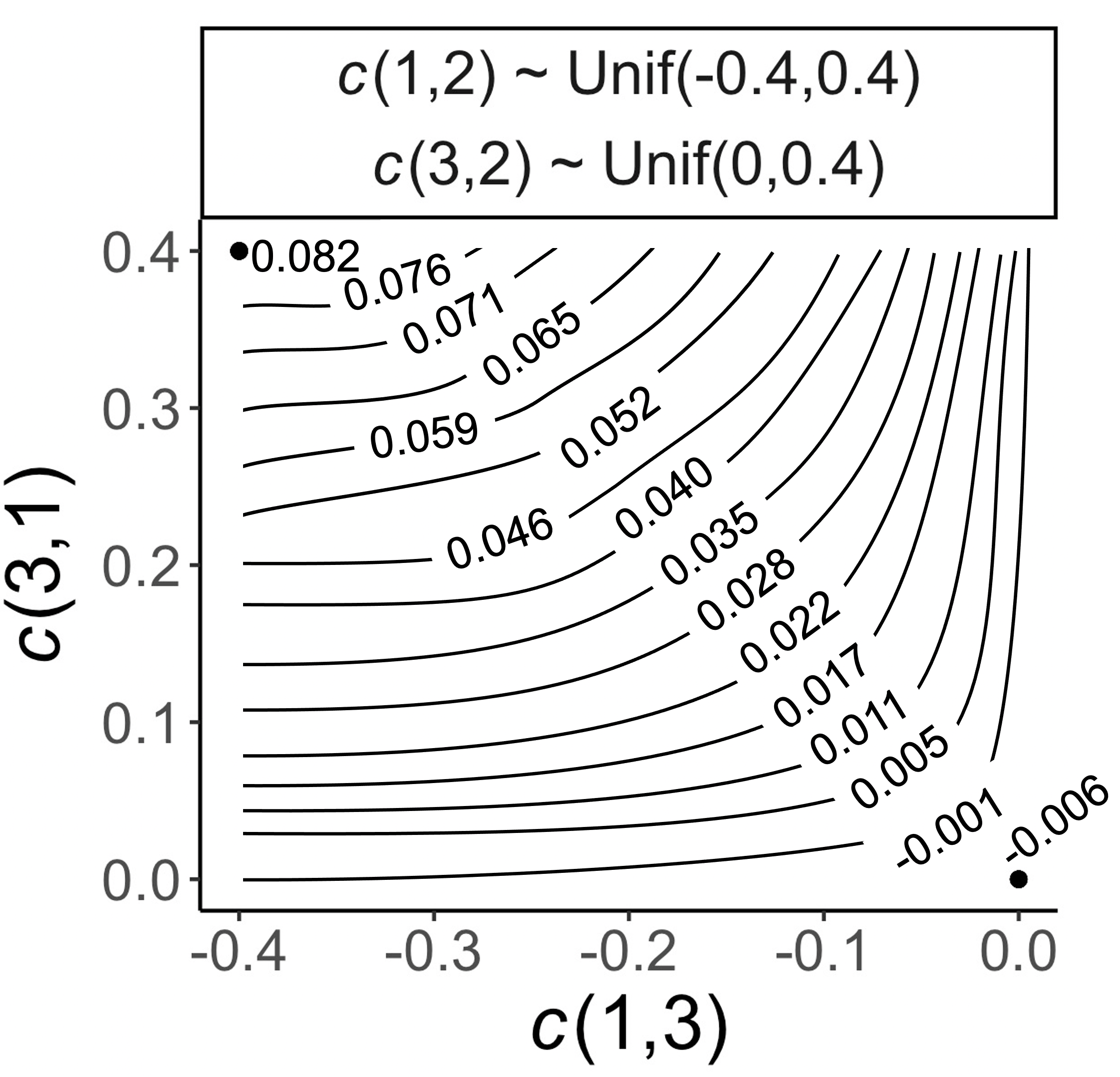}
\caption{Contour plot of the confounding function adjusted treatment effect estimate for RAS versus VATS, $CATE_{1,3}$. Four confounding functions are involved, $c(1,3,x)$, $c(3,1,x)$, $c(1,2,x)$, $c(3,2,x)$. The  black lines report the adjusted causal effect estimates corresponding to pairs of values for  $c(1,3,x)$ and $c(3,1,x)$ spaced on a grid $(-0.4, 0) \times (0,0.4)$ incremented by 0.02, and under the prior distributions $c(3,2,x) \sim \mathcal{U}(0,0.4) $ and $c(1,2,x) \sim \mathcal{U}(-0.4,0.4)$.} 
\label{fig: heatmap}
\end{figure}

\section{Discussion} \label{sec:discussion}
We propose a Monte Carlo sensitivity analysis approach for causal inference with multiple treatments and a binary outcome. In such settings, few off-the-shelf sensitivity analysis methods are available. The proposed method is readily available in the $\R$ package  $\textsf{SAMTx}$. Our approach has two main advantages. First, nested MI is used to draw values of the confounding functions from their user-specified prior distributions and draw values of the generalized propensity scores (a component of the bias formula) from their predictive posterior distributions, thereby  the uncertainty about the values of the sensitivity parameters and the statistical uncertainty due to sampling  are seamlessly amalgamated in the analysis. Our sensitivity analysis estimators are based on modularized Bayesian inference,  without contaminating the generalized propensity score model and distorting inferences about the causal effect \citep{zigler2013model}. Second, confounding functions have previously been paired with the inverse probability weighting for a sensitivity analysis \citep{robins1999association, brumback2004sensitivity, li2011propensity}, inheriting the issues of extreme weights and the potential inefficiency of the weighting based estimator. Our approach estimates causal effect estimates using outcome modeling via machine learning, substantially increasing modeling flexibility and efficiency. 

 We use simulations to gain insight into the operating characteristics of our proposed sensitivity analysis approach in the  complex multiple treatment setting. 
There are five key findings. First, formulating the sensitivity parameters in respect of the remaining standard deviation is a practical and useful strategy. 
Second, in a sensitivity analysis for unmeasured confounding with multiple treatments, considerations of all pairs of treatments are needed (Figure~\ref{fig: binary-sim}, I: 3rd $A$ ignored). Third, with a binary outcome, our sensitivity parameters are naturally bounded on the interval $[-1,1]$, offering a workable  noninformative specification of the sensitivity parameters for the situation where the domain knowledge about unmeasured confounding is absent. Simulations demonstrate that, in the presence of unmeasured confounding,  our sensitivity analysis estimators achieve substantially larger bias reduction and better coverage probability than the naive estimators ignoring unmeasured confounding.  Specifications with large uncertainties about the values of the sensitivity parameters, which cover the extreme scenarios of particularly strong degrees of unmeasured confounding, could yield highly variable adjusted causal conclusions. Subject-matter expertise is needed to judge whether these extreme scenarios are plausible.
Fourth, formulating the confounding function as a scalar parameter is a useful simplifying strategy for specifying the prior distributions for the confounding functions when there are many measured confounders. Our sensitivity analysis estimators in the contextualized simulations are similar to the causal effect estimates as if we had access to the unmeasured confounders, provided that the confounding functions are known.
Fifth, a uniquely new investigation into the statistical properties of our sensitivity analysis estimators demonstrate that when there is at least moderate covariate overlap, our sensitivity analysis estimators provide close-to-nominal coverage probability even under complex mechanisms of unmeasured confounding. As the goal of a sensitivity analysis is to adjust the causal conclusions about treatment effects, the underlying assumption of covariate overlap (A2) is still required for drawing valid causal inferences. One may need to apply techniques for identifying a common support region \citep{hu2020estimation} for retaining inferential units and for avoiding extrapolating over areas of the covariate space where common support does not exist before carrying out a sensitivity analysis.

The comprehensive sensitivity analysis of the SEER-Medicare  data elucidates the comparative causal effects of three popular surgical approaches for treating early stage NSCLC tumors based on four postoperative outcomes. Various assumptions about the potential magnitude and direction of departure from the ignorability assumption were used to evaluate the sensitivity of the estimated causal effects to unmeasured confounding. While causal conclusions can be sensitive for respiratory complication and 30-day readmission, they were relatively robust against a variety of violations of the ignorability assumption with respect to prolonged LOS and ICU stay. 

 Our approach can be extended in several directions. First, 
 though the constant parameterization of our sensitivity parameters works well in a wide variety of settings, developing a low-dimensional confounding function containing summarizing or key information of the measured confounders may increase the possibility of uncovering the true form of the confounding function \citep{li2011propensity}. Second, developing a fully Bayesian sensitivity analysis approach could be a worthwhile  contribution. While our near-Bayesian approach is able to incorporate the uncertainty about sensitivity parameters into the  analysis, the application of Bayes theorem in a non-identifiable model can sometimes rule out certain patterns of unmeasured confounding that are incompatible with data, consequentially leading to more consistent 95\% frequentist coverage probability.  \citep{mccandless2017comparison}.  Finally, future work on enhancing BART for causal inference with multiple treatments and binary outcomes could potentially improve the performance of our sensitivity analysis approach.  One promising avenue is allowing separate priors inducing different levels of regularization, and consequentially reducing regularization-induced bias in treatment effect estimates \citep{hahn2020bayesian}.

\section{Software}
Software in the form of $\R$ package $\code{SAMTx}$ to implement our methods and replicate the simulations is available at \url{https://cran.r-project.org/web/packages/SAMTx/index.html}. 

\begin{appendix}
\section*{Proofs of theorems}\label{appn} 
 \begin{proof}[Proof of Theorem~\ref{thm:biasform}]
 Under nonignorable treatment assignment, ignoring unmeasured confounding will lead to the following bias in the estimate of the causal effect $CATE_{a_j,a_k}$, 
$$\text{Bias} = \E[Y|A=a_j, X=x] -\E[Y|A=a_k, X=x]-\E\lsq Y(a_j)-Y(a_k)\cond X=x\rsq.$$ To simplify notation, we will use $\E\lsq \cdot \cond a, x\rsq $ to denote $\E\lsq \cdot \cond A=a, X=x\rsq $. 
Applying the law of total expectation to $\E\lsq Y(a_j \cond X=x)\rsq$, we have 
\begin{eqnarray*}
\E\lsq Y(a_j) \cond X=x\rsq &=& \sum_{m=0}^J p_{m}\E\lsq Y(a_j)\cond a_m, x \rsq. 
\end{eqnarray*}
Similary, 
\begin{eqnarray*}
\E\lsq Y(a_k) \cond X=x\rsq &=& \sum_{m=0}^J p_{m}\E\lsq Y(a_k)\cond a_m, x \rsq. 
\end{eqnarray*}
Therefore, 
\begin{eqnarray} \nonumber
\E \lsq Y(a_j) - Y(a_k) \cond x  \rsq & = &p_0 \E \lsq Y(a_j)-Y(a_k) \cond a_0,x \rsq + \ldots + p_J \E \lsq Y(a_j)-Y(a_k) \cond a_J,x \rsq  \\ \label{eq:twoEs}
     &&+ p_j \E \lsq Y(a_j)-Y(a_k) \cond a_j,x \rsq +  p_k \E \lsq Y(a_j)-Y(a_k) \cond a_k,x \rsq.  
\end{eqnarray}
We will repeatedly use $\E \lsq Y(a_m) \cond a_m, x \rsq = \E \lsq Y \cond a_m, x \rsq,  \forall m \in \{1, \ldots, J\}$.  Rewriting the last two items of the RHS of the equation~\eqref{eq:twoEs}  yields 
\begin{eqnarray}\nonumber
&&p_j \E \lsq Y(a_j)-Y(a_k) \cond a_j,x \rsq +  p_k \E \lsq Y(a_j)-Y(a_k) \cond a_k,x \rsq\\\nonumber
&=&p_j \E \lsq Y(a_j) \cond a_j, x \rsq - p_j \E \lsq Y(a_k) \cond a_j, x \rsq + p_k \E \lsq Y(a_j) \cond a_k,x \rsq - p_k \E \lsq Y(a_k) \cond a_k,x \rsq \\\nonumber
&=& p_j \E \lsq Y \cond a_j, x \rsq - p_k \E \lsq Y \cond a_k, x \rsq + p_k \E \lsq Y (a_j)\cond a_k, x \rsq - p_j \E \lsq Y (a_k)\cond a_j, x \rsq  \\\nonumber
&=& p_j \lbc \E \lsq Y \cond a_j, x \rsq -  \E \lsq  Y \cond a_k, x \rsq \rbc + p_j  \E \lsq  Y \cond a_k, x \rsq - p_k \E \lsq Y \cond a_k, x \rsq \\\nonumber
&& - p_j \E \lsq Y (a_k)\cond a_j, x \rsq + p_k \E \lsq Y (a_j)\cond a_k, x \rsq\\\nonumber
&=&  p_j \lbc \E \lsq Y \cond a_j, x \rsq -  \E \lsq  Y \cond a_k, x \rsq \rbc + p_j \lbc  \E \lsq Y(a_k) \cond a_k,x \rsq - \E \lsq Y(a_k) \cond a_j,x \rsq  \rbc \\\nonumber
&&+ p_k \lbc  \E \lsq  Y(a_j) - Y(a_k) \cond a_k, x\rsq \rbc \\ \label{eq: eqpk}
&=&p_j \lbc \E \lsq Y \cond a_j, x \rsq -  \E \lsq  Y \cond a_k, x \rsq \rbc +p_j c(a_k, a_j, x) +  p_k \lbc  \E \lsq  Y(a_j) - Y(a_k) \cond a_k, x\rsq \rbc. 
\end{eqnarray}

Let $\tilde{p} = 1- p_j - p_k$.  By rewriting $p_k \lbc  \E \lsq  Y(a_j) - Y(a_k) \cond a_k, x\rsq \rbc$  in equation~\eqref{eq: eqpk}, we have 
\begin{eqnarray*}
&&p_k \lbc  \E \lsq  Y(a_j) - Y(a_k) \cond a_k, x\rsq \rbc \\
&=& (1-p_j-\tilde{p}) \lbc \E \lsq Y \cond a_j ,x  \rsq - \E \lsq Y \cond a_k, x \rsq  + \E \lsq  Y(a_j) \cond a_k, x \rsq  - \E \lsq  Y(a_j) \cond a_j, x\rsq \rbc\\
&=& (1-p_j)  \lbc \E \lsq Y \cond a_j ,x  \rsq - \E \lsq Y \cond a_k, x \rsq \rbc  - (1-p_j - \tilde{p}) c(a_j, a_k, x)  - \tilde{p} \lbc \E \lsq Y \cond a_j, x \rsq - \E \lsq Y \cond a_k, x \rsq  \rbc \\
&=& (1-p_j)  \lbc \E \lsq Y \cond a_j ,x  \rsq - \E \lsq Y \cond a_k, x \rsq \rbc - p_k c(a_j, a_k,x) \\
&&- (1-p_j-p_k) \lbc \E \lsq Y (a_j) \cond a_j, x \rsq - \E \lsq Y(a_k) \cond a_k, x \rsq  \rbc.
\end{eqnarray*}
Taken together, 
\begin{eqnarray*}
\text{Bias} &= &\sum_{l: l \in \mathcal{A} \setminus\{a_j, a_k\} } -p_l \E \lsq  Y(a_j) - Y(a_k) \cond a_l, x\rsq - p_j c(a_k, a_j, x) + p_k c(a_j, a_k, x) \\
&& + \sum_{l: l \in \mathcal{A} \setminus\{a_j, a_k\} } p_l\lbc  \E\lsq Y(a_j) \cond a_j, x \rsq  - \E\lsq Y(a_k) \cond a_k, x \rsq \rbc\\
&=& -p_j c(a_k, a_j, x)+ p_k c(a_j, a_k, x) - \sum_{l: l \in \mathcal{A} \setminus\{a_j, a_k\} } p_l \lbc c(a_k, a_l, x) - c(a_j, a_l, x) \rbc.
\end{eqnarray*}
 \end{proof}

\begin{proof}[Proof of Theorem~\ref{thm: Ycorrection}]
The causal effect is defined as the difference between the potential outcomes and the estimates of causal effects are based on the observed outcomes.  To correct the potential bias, we adjust the observed outcome $Y$ of an individual who received treatment $a_j$ as follows
\begin{eqnarray*}
\YCF &=& Y - \lbc \E \lsq  Y(a_j) \cond a_j, x\rsq -  \E \lsq  Y(a_j) \cond x\rsq \rbc. 
\end{eqnarray*}
Let $p_l \equiv \mathbb{P}(A=a_l \cond X=x)$ and $\E [\cdot \cond a_l,x] \equiv \E [\cdot \cond A=a_l,X=x], \forall a_l \in \mathcal{A}$.  By applying the law of total expectation to $\E \lsq  Y(a_j) \cond x\rsq$, we can show that 
\begin{eqnarray*}
 &&\E \lsq  Y(a_j) \cond a_j, x\rsq -  \E \lsq  Y(a_j) \cond x\rsq \\
 &=& \E \lsq  Y(a_j) \cond a_j, x\rsq  -  \sum_{l=1}^J p_l\E \lsq  Y(a_j) \cond a_l, x\rsq \\
 &=&(1-p_j) \E \lsq  Y(a_j) \cond a_j, x\rsq - \sum_{l\neq j}^J p_l \E \lsq  Y(a_j) \cond a_l, x\rsq\\
 &=& \sum_{l \neq j}^J p_l \lbc \E \lsq  Y(a_j) \cond a_j, x\rsq -\E \lsq  Y(a_j) \cond a_l, x\rsq  \rbc\\
 &=&  \sum_{l \neq j}^J p_l c(a_j, a_l,x).
\end{eqnarray*}
Now we prove that replacing $Y$ with $\YCF$ removes the bias in equation~\eqref{eq:biasform} of Theorem~\ref{thm:biasform}. 
Consider the causal effect between any pair of treatments $a_j$ and $a_k$. Using the adjusted outcomes $\YCF$, the estimate of the causal effect is 
\begin{eqnarray*}
&&\E \lsq  \YCF \cond a_j, x  \rsq -\E \lsq  \YCF \cond a_j, x  \rsq \\
&=& \E \lsq \lp Y- \sum_{l\neq j}^J p_l c(a_j, a_l, x) \rp \cond a_j, x \rsq -  \E \lsq \lp Y- \sum_{l\neq k}^J p_l c(a_k, a_l, x) \rp \cond a_k, x\rsq\\
&=& \E (Y \cond a_j, x) - \E (Y \cond a_k, x) \underbrace{+p_j c(a_k, a_j, x) - p_k c(a_j, a_k, x) + \sum\limits_{l: l \in \mathcal{A}\setminus\{a_j, a_k\}} p_{l} \lbc c(a_k, a_l, x) -c(a_j,a_l,x) \rbc}_{\text{--Bias in equation~\eqref{eq:biasform}}}.
\end{eqnarray*} 
\end{proof}
\end{appendix}

\section*{Acknowledgements}

The first author was supported in part by NIH Grants  R21 CA245855-01and and P30CA196521-01, and by award ME\_2017C3\_9041 from the Patient-Centered Outcomes Research Institute.



\bibliographystyle{imsart-nameyear} 
\bibliography{references}       

\end{document}


\begin{frontmatter}
\title{Supplementary Materials for ``A flexible sensitivity analysis approach for unmeasured confounding with multiple treatments and a binary outcome with application to SEER-Medicare lung cancer data''\thanksref{t1}}
\runtitle{Sensitivity analysis with multiple treatments}
\thankstext{T1}{Corresponding Author: Liangyuan Hu,  Department of Biostatistics and Epidemiology, Rutgers School of Public Health, Piscataway, NJ 08854, USA. Email:liangyuan.hu@rutgers.edu}

\begin{aug}
\author[A]{\fnms{Liangyuan} \snm{Hu}\ead[label=e1]{liangyuan.hu@rutgers.edu}},
\author[B]{\fnms{Jungang} \snm{Zou}\ead[label=e2]{jz3183@cumc.columbia.edu}},
\author[C]{\fnms{Chenyang} \snm{Gu}\ead[label=e3]{chenyang.gu@analysisgroup.com}},
\author[D]{\fnms{Jiayi} \snm{Ji}\ead[label=e4]{jiayi.ji@mountsinai.org}},
\author[E]{\fnms{Michael} \snm{Lopez}\ead[label=e5]{mlopez1@skidmore.edu}}
\and
\author[F]{\fnms{Minal} \snm{Kale}\ead[label=e6]{minal.kale@mountsinai.org}}
\address[A]{Department of Biostatistics and Epidemiology,
Rutgers School of Public Health,
\printead{e1}}

\address[B]{Department of Biostatistics,
Columbia University,
\printead{e2}}

\address[C]{Analysis Group, Inc.,
\printead{e3}}

\address[D]{Department of Population Health Science and Policy,
Icahn School of Medicine at Mount Sinai,
\printead{e4}}

\address[E]{Department of Mathematics, Skidmore College,
\printead{e5}}

\address[F]{Department of Medicine,
Icahn School of Medicine at Mount Sinai,
\printead{e6}}

\end{aug}

\end{frontmatter}


\section{Design of supplementary simulation in Section 3.1}
In a simplified scenario of the illustrative simulation in Section 3.1, we assume the independence between $X_1$ and $U$. We used the same treatment assignment model and modified the outcome generating models to maintain the same observed outcome event probabilites in each of three treatment groups.  Three sets of nonparallel response surfaces were generated,
\begin{align*}
\mathbb{P} \lp Y (1) =1 \mid X_1, U \rp& =\text{logit}^{-1}(0.1X_1-1.8U)\\
\mathbb{P} \lp Y (2) =1 \mid X_1, U \rp & =\text{logit}^{-1}(-0.7X_1+1.6U)\\
\mathbb{P} \lp Y (3) =1 \mid X_1, U \rp & =\text{logit}^{-1}(-0.5X_1+2.1U). 
\end{align*}

Under this simulation configuration, the observed outcome event probability was 0.40 in $A=1$, 0.51 in $A=2$ and 0.64 in $A=3$ and the true $\text{CATE}_{1,2}=-0.16$, $\text{CATE}_{1,3}=-0.29$ and $\text{CATE}_{2,3}=-0.13$. Web Figure 1 shows the estimates of three pairwise causal effects among 1000 replications corresponding to each of four strategies (I)-(IV) described in Section 3.1. \\

\section{Supplementary tables and figures} 
Web tables and figures referenced in the paper are provided below. 

\begin{table}[H]
\centering
\caption{The data generating process of the covariates.}
\begin{tabular}{cccccc}\hline
Variables &  Distribution &Variables &  Distribution &Variables &  Distribution \\\hline
$X_1$ & $N(0,1)$ & $X_6$ &$\text{Bern}(0.6)$ & $X_{11}$ & Student's $t_{10}$ \\
$X_2$ & $\mathcal{U}(-1,1)$ &$X_7$ & $\text{Bern}(0.3)$ &$X_{12}$ & Gamma(2,2)\\
$X_3$ &Beta(3,3)&$X_8$ & $\text{Bern}(0.5)$ &$X_{13}$ & InverseGamma(20,20)\\
$X_4$ &$N(-1,1)$ & $X_9$ & $\text{Multinom}(1, 0.3, 0.2,0.5)$ & $X_{14}$& $N(-1,2)$\\
$X_5$ &$N(1,1)$ &$X_{10}$&$\text{Multinom}(1, 0.1, 0.8,0.1)$ & $X_{15}$ &$N(1,2)$\\\hline
\end{tabular}
\end{table}

\setlength{\tabcolsep}{3pt} 
\begin{table}[H]
\centering
\caption{Average absolute bias (AAB) and root-mean-squared error (RMSE) in the estimated conditional average causal effects for illustrative simulation in Section 3.1. Sensitivity analysis strategies (I)-(IV) were used. I: True $c^0$. I: 3rd $A$ ignored: $c(\cdot)$ functions involving the third treatment were set to zero. II: $\mathcal{U} \lp \max(-1, c^0 - h\hat{\sigma}), \min(1, c^0+h\hat{\sigma}) \rp$. III: $\mathcal{U} \lp \max(-1, c^0 - 2h \hat{\sigma}),  c^0 \rp$  or $\mathcal{U} \lp  c^0 , \min(1, c^0+2h \hat{\sigma}) \rp$. IV: $\mathcal{U}(-1,1)$. The CATE results that could be achieved if $U$ were actually observed and the naive CATE estimators ignoring $U$  are also presented. The true $CATE_{1,2}=-0.16$, $CATE_{1,3}=-0.29$ and $CATE_{2,3}=-0.13$. }
\small
\begin{tabular}{cccccccccccccc}\hline
&\multicolumn{6}{c}{$c(a_1,a_2,x_1)$}&&\multicolumn{6}{c}{$c(a_1,a_2)$}\\\cline{2-14}
&\multicolumn{2}{c}{$CATE_{1,2}$}&\multicolumn{2}{c}{$CATE_{1,3}$}&\multicolumn{2}{c}{$CATE_{2,3}$}&&\multicolumn{2}{c}{$CATE_{1,2}$}&\multicolumn{2}{c}{$CATE_{1,3}$}&\multicolumn{2}{c}{$CATE_{2,3}$}\\
& AAB&RMSE&AAB&RMSE&AAB&RMSE&&AAB&RMSE&AAB&RMSE&AAB&RMSE\\\hline
$U$ included&.01&.01&.01&.01&.01&.01&&.01&.01&.01&.01&.01&.01\\
I &.01&.01&.01&.01&.01&.01&&.01&.02&.01&.01&.01&.01\\
I: 3rd $A$ ignored &.02&.02&.02&.02&.03&.03&&.03&.03&.03&.03&.02&.03\\
II: $h=1$ &.02&.03&.02&.02&.02&.02&&.02&.02&.03&.03&.02&.03\\
II: $h=2$ &.04&.05&.03&.04&.03&.04&&.03&.04&.04&.06&.04&.05\\
III &.05&.06&.05&.06&.05&.06&&.05&.07&.06&.07&.06&.07\\
IV &.04&.05&.03&.04&.03&.04&&.03&.06&.05&.06&.04&.05\\
$U$ ignored &.06&.06&.07&.07&.06&.06&&.06&.06&.07&.07&.06&.06\\\hline
\end{tabular}
\end{table}

\setlength{\tabcolsep}{2pt} 
\begin{table}[H]
\centering
\caption{Average absolute bias (AAB) and root-mean-squared error (RMSE) in the estimated conditional average treatment effects for contextualized simulation in Section 3.2. Three sensitivity analysis strategies were used: 1) true $c^0$, 2) $\mathcal{U} \lp  c^0 - 2\hat{\sigma}, c^0+2\hat{\sigma} \rp$ but bounded within $[-1,1]$,  3) $\mathcal{U}(-1,1)$. The CATE results that could be achieved if $U$ were actually observed and the naive CATE estimators ignoring $U$  are also presented. The true $CATE_{1,2}=0.05$, $CATE_{1,3}=-0.11$ and $CATE_{2,3}=-0.16$.  }
\small
\begin{tabular}{ccccccccccccccc}\hline
&&\multicolumn{6}{c}{$N=1500$, ratio of unit = 1:1:1}&&\multicolumn{6}{c}{$N=10000$, ratio of unit = 1:10:9}\\\cline{3-15}
&&\multicolumn{2}{c}{$CATE_{1,2}$}&\multicolumn{2}{c}{$CATE_{1,3}$}&\multicolumn{2}{c}{$CATE_{2,3}$}&&\multicolumn{2}{c}{$CATE_{1,2}$}&\multicolumn{2}{c}{$CATE_{1,3}$}&\multicolumn{2}{c}{$CATE_{2,3}$}\\
&& AAB&RMSE&AAB&RMSE&AAB&RMSE&&AAB&RMSE&AAB&RMSE&AAB&RMSE\\\hline
&$U$ included&.01&.01&.01&.01&.01&.01&&.00&.00&.00&.00&.00&.00\\
&True $c^0$ &.01&.01&.01&.01&.01&.01&&.00&.00&.00&.00&.00&.00\\
UMC(i) &$\mathcal{U}(c^0-2\hat{\sigma},c^0+2\hat{\sigma})$&.01&.01&.01&.01&.01&.01&&.01&.01&.01&.01&.01&.01\\
&$\mathcal{U}(-1,1)$&.01&.01&.01&.01&.01&.02&&.01&.01&.01&.01&.01&.01\\
&$U$ ignored&.02&.02&.03&.04&.03&.03&&.01&.01&.01&.01&.01&.01\\\hline
&$U$ included&.01&.01&.01&.01&.01&.01&&.00&.00&.00&.00&.00&.00\\
&True $c^0$ &.01&.01&.01&.01&.01&.01&&.00&.00&.00&.00&.00&.00\\
UMC(ii) &$\mathcal{U}(c^0-2\hat{\sigma},c^0+2\hat{\sigma})$&.01&.02&.01&.01&.01&.02&&.01&.01&.01&.01&.01&.01\\
&$\mathcal{U}(-1,1)$&.02&.02&.01&.02&.02&.02&&.01&.01&.01&.01&.01&.02\\
&$U$ ignored&.03&.03&.04&.04&.04&.04&&.02&.02&.02&.02&.02&.02\\\hline
&$U$ included&.01&.01&.01&.01&.01&.01&&.00&.00&.00&.00&.00&.00\\
&True $c^0$ &.01&.01&.01&.01&.01&.01&&.00&.00&.00&.00&.00&.00\\
UMC(iii) &$\mathcal{U}(c^0-2\hat{\sigma},c^0+2\hat{\sigma})$&.02&.03&.02&.03&.02&.03&&.01&.01&.01&.02&.01&.02\\
&$\mathcal{U}(-1,1)$&.03&.03&.02&.03&.03&.04&&.02&.02&.01&.02&.02&.02\\
&$U$ ignored&.04&.04&.05&.05&.05&.05&&.02&.02&.03&.03&.03&.03\\\hline
\end{tabular}
\end{table}

\setlength{\tabcolsep}{5pt} 
\setlength{\tabcolsep}{5pt} 
\begin{table}[H]
\centering
\caption{Sensitivity analysis for causal inference about average treatment effects of three surgical approaches on prolonged length of stay (LOS) based on the risk difference, using the SEER-Medicare lung cancer data. Three surgical approaches are $A=1$: robotic-assisted surgery (RAS), $A=2$: open thoracotomy (OT), $A=3$: video-assisted thoracic surgery (VATS).  The adjusted effect estimates and 95\% uncertainty intervals are displayed. Interval estimates are based on pooled posterior samples across model fits arising from $30 \times 30$ data sets.  The remaining standard deviation in the outcome not explained by the measured covariates is $\hat{\sigma}$ = 0.27 . We assume  $c(1,3,x)  \sim \mathcal{U} (-0.4,0), \; c(3,1,x) \sim \mathcal{U}(0, 0.4), \; c(2,3,x) \sim \mathcal{U} (-0.4,0), \; c(3,2,x) \sim \mathcal{U}(0, 0.4)$ for specification (i)-(v). }
\label{tab:SA-resp}
\begin{tabular}{cp{0.42\textwidth}ccc}
&Prior distributions on $c(\cdot)$ functions  & RAS vs. OT & RAS vs. VATS & OT vs. VATS \\\hline
(i)&$c(1,2,x) \sim \mathcal{U} (-0.2,0),  c(2,1,x) \sim \mathcal{U}(0, 0.2)$ &  $-.05 (-.08,-.02)$ & $.06(.03,.09)$ & $.10 (.08,.12)$\\
(ii)&$ c(1,2,x) \sim \mathcal{U}(0, 0.2), c(2,1,x) \sim \mathcal{U} (-0.2,0)$ &  $-.11 (-.14,-.08)$ & $.05(.02,.08)$ & $.12 (.10,.14)$\\
(iii)&$c(1,2,x), c(2,1,x) \sim \mathcal{U} (-0.2,0) $ & $-.05(-.08,-.02)$ & $.04(.01,.07)$ & $.14(.12,.16)$ \\
(iv)&$c(1,2,x), c(2,1,x) \sim \mathcal{U}(0, 0.2)$& $-.11(-.14,-.08)$ & $.00(-.03,.03)$ & $.08 (.06,.10)$\\
(v)&$c(1,2,x), c(2,1,x) \sim \mathcal{U}(-0.2, 0.2)$& $-.10(-.14,-.06)$ & $.01(-.03,.05)$ & $.09 (.06,.12)$\\
(vi)&all $c(\cdot) \sim \mathcal{U}(-1, 1)$ & $-.07(-.15,.01)$ & $.05(-.03,.13)$ & $.12(.06,.18)$ \\
(vii)&all $c(\cdot) =0$ & $-.09(-.11,-.07)$ & $.02(-.00,.04)$ & $.11(.10,.12)$\\\hline
\end{tabular}
\end{table}

\begin{table}[H]
\centering
\caption{Sensitivity analysis for causal inference about average treatment effects of three surgical approaches on intensive care unit (ICU) stay based on the risk difference, using the SEER-Medicare lung cancer data. Three surgical approaches are $A=1$: robotic-assisted surgery (RAS), $A=2$: open thoracotomy (OT), $A=3$: video-assisted thoracic surgery (VATS). The adjusted effect estimates and 95\% uncertainty intervals are displayed. Interval estimates are based on pooled posterior samples across model fits arising from $30 \times 30$ data sets.  The remaining standard deviation in the outcome not explained by the measured covariates is $\hat{\sigma}$ = 0.46 . We assume  $c(1,3,x)  \sim \mathcal{U} (-0.4,0), \; c(3,1,x) \sim \mathcal{U}(0, 0.4), \; c(2,3,x) \sim \mathcal{U} (-0.4,0), \; c(3,2,x) \sim \mathcal{U}(0, 0.4)$ for specification (i)-(v).}
\label{tab:SA-resp}
\begin{tabular}{cp{0.42\textwidth}ccc}
&Prior distributions on $c(\cdot)$ functions  & RAS vs. OT & RAS vs. VATS & OT vs. VATS \\\hline
(i)&$c(1,2,x) \sim \mathcal{U} (-0.4,0),  c(2,1,x) \sim \mathcal{U}(0, 0.4)$ &  $-.11 (-.14,-.07)$ & $.02(-.01,.05)$ & $.17 (.15,.19)$\\
(ii)&$ c(1,2,x) \sim \mathcal{U}(0, 0.4),c(2,1,x) \sim \mathcal{U} (-0.4,0)$ &  $-.15 (-.18,-.12)$ & $.00(-.03,.03)$ & $.15 (.13,.17)$\\
(iii)&$c(1,2,x), c(2,1,x) \sim \mathcal{U} (-0.4,0) $ & $-.13(-.16,-.10)$ & $.03(-.00,.06)$ & $.16(.14,.18)$ \\
(iv)&$c(1,2,x), c(2,1,x) \sim \mathcal{U}(0, 0.4)$ & $-.15(-.18,-.12)$ & $-.02(-.05,.01)$ & $.11 (.09,.13)$\\
(v)&$c(1,2,x), c(2,1,x) \sim \mathcal{U}(-0.4, 0.4)$ & $-.14(-.18,-.10)$ & $-.01(-.05,.03)$ & $.10 (.07,.13)$\\
(vi)&all $c(\cdot) \sim \mathcal{U}(-1, 1)$ & $-.13(-.21,-.05)$ & $.03(-.05,.11)$ & $.16(.10,.24)$ \\
(vii)&all $c(\cdot) =0$  & $-.14(-.16,-.12)$ & $.01(-.00,.02)$ & $.16(.15,.17)$\\\hline
\end{tabular}
\end{table}

\begin{table}[H]
\centering
\caption{Sensitivity analysis for causal inference about average treatment effects of three surgical approaches on 30-day readmission rate based on the risk difference, using the SEER-Medicare lung cancer data. Three surgical approaches are $A=1$: robotic-assisted surgery (RAS), $A=2$: open thoracotomy (OT), $A=3$: video-assisted thoracic surgery (VATS).  The adjusted effect estimates and 95\% uncertainty intervals are displayed. Interval estimates are based on pooled posterior samples across model fits arising from $30 \times 30$ data sets.  The remaining standard deviation in the outcome not explained by the measured covariates is $\hat{\sigma}$ = 0.28 . We assume  $c(1,3,x)  \sim \mathcal{U} (-0.4,0), \; c(3,1,x) \sim \mathcal{U}(0, 0.4), \; c(2,3,x) \sim \mathcal{U} (-0.4,0), \; c(3,2,x) \sim \mathcal{U}(0, 0.4)$ for specification (i)-(v).}
\label{tab:SA-resp}
\begin{tabular}{cp{0.42\textwidth}ccc}
&Prior distributions on $c(\cdot)$ functions  & RAS vs. OT & RAS vs. VATS & OT vs. VATS \\\hline
(i)&$c(1,2,x) \sim \mathcal{U} (-0.2,0),  c(2,1,x) \sim \mathcal{U}(0, 0.2)$ &  $.02 (-.01,.05)$ & $.06(.03,.09)$ & $.03 (.01,.05)$\\
(ii)&$c(1,2,x) \sim \mathcal{U}(0, 0.2), c(2,1,x) \sim \mathcal{U} (-0.2,0)$ &  $-.01 (-.04,.02)$ & $.03(-.00,.06)$ & $.05(.03,.07)$\\
(iii)&$c(1,2,x), c(2,1,x) \sim \mathcal{U} (-0.2,0) $ & $.03(-.00,.06)$ & $.05(.02,.08)$ & $.07(.05,.09)$ \\
(iv)&$c(1,2,x), c(2,1,x) \sim \mathcal{U}(0, 0.2)$ & $-.01(-.04,.02)$ & $.01(-.02,.04)$ & $.00 (-.02,.02)$\\
(v)&$c(1,2,x), c(2,1,x) \sim \mathcal{U}(-0.2, 0.2)$ & $-.00(-.04,.04)$ & $.02(-.02,.06)$ & $.01 (-.03,.05)$\\
(vi)&all $c(\cdot) \sim \mathcal{U}(-1, 1)$ & $.01(-.07,.09)$ & $.05(-.03,.13)$ & $.04(-.03,.11)$ \\
(vii)&all $c(\cdot) =0$ & $-.00(-.02,.02)$ & $.02(-.00,.04)$ & $.02(-.00,.04)$\\\hline
\end{tabular}
\end{table}

\begin{table}[H]
\centering
\caption{Sensitivity analysis for causal inference about average treatment effects among those who were operated with robotic-assisted surgery  on postoperative respiratory complications based on the risk difference, using the SEER-Medicare lung cancer data. Three surgical approaches are $A=1$: robotic-assisted surgery (RAS), $A=2$: open thoracotomy (OT), $A=3$: video-assisted thoracic surgery (VATS).  The adjusted estimates of causal effects and 95\% uncertainty intervals are displayed. Interval estimates are based on pooled posterior samples across model fits arising from $30 \times 30$ data sets.  The remaining standard deviation in the outcome not explained by the measured covariates is $\hat{\sigma}$ = 0.46. We assume  $c(1,3,x)  \sim \mathcal{U} (-0.4,0), \; c(3,1,x) \sim \mathcal{U}(0, 0.4), \; c(2,3,x) \sim \mathcal{U} (-0.4,0), \; c(3,2,x) \sim \mathcal{U}(0, 0.4)$ for specification (i)-(v).}
\label{tab:SA-resp}
\begin{tabular}{cp{0.42\textwidth}cc} \\
&Prior distributions on $c(\cdot)$ functions  & RAS vs. OT & RAS vs. VATS  \\\hline
(i)&$c(1,2,x) \sim \mathcal{U} (-0.4,0),  c(2,1,x) \sim \mathcal{U}(0, 0.4)$ &  $.02 (-.02,.06)$   & $.04(.00,.08)$ \\
(ii)&$ c(1,2,x) \sim \mathcal{U}(0, 0.4), c(2,1,x) \sim \mathcal{U} (-0.4,0)$ &  $-.02 (-.06,.02)$& $.02(-.02,.06)$  \\
(iii) & $c(1,2,x), c(2,1,x) \sim \mathcal{U} (-0.4,0) $ & $.03(-.01,.07)$ & $.04(.00,.08)$ \\
(iv) & $c(1,2,x), c(2,1,x) \sim \mathcal{U}(0, 0.4)$ & $.01(-.03,.05)$ & $.01(-.03,.05)$  \\
(v)&$c(1,2,x), c(2,1,x) \sim \mathcal{U}(-0.4, 0.4)$ & $.02(-.03,.07)$ & $.02(-.03,.07)$  \\
(vi)&all $c(\cdot) \sim \mathcal{U}(-1, 1)$ & $.01(-.09,.11)$ & $.03(-.07,.13)$  \\
(vii) &all $c(\cdot) =0$ & $-.01(-.04,.02)$ & $.00(-.03,.03)$ \\\hline
\end{tabular}
\end{table}

\begin{figure}[H] 
\includegraphics[width=0.9\textwidth]{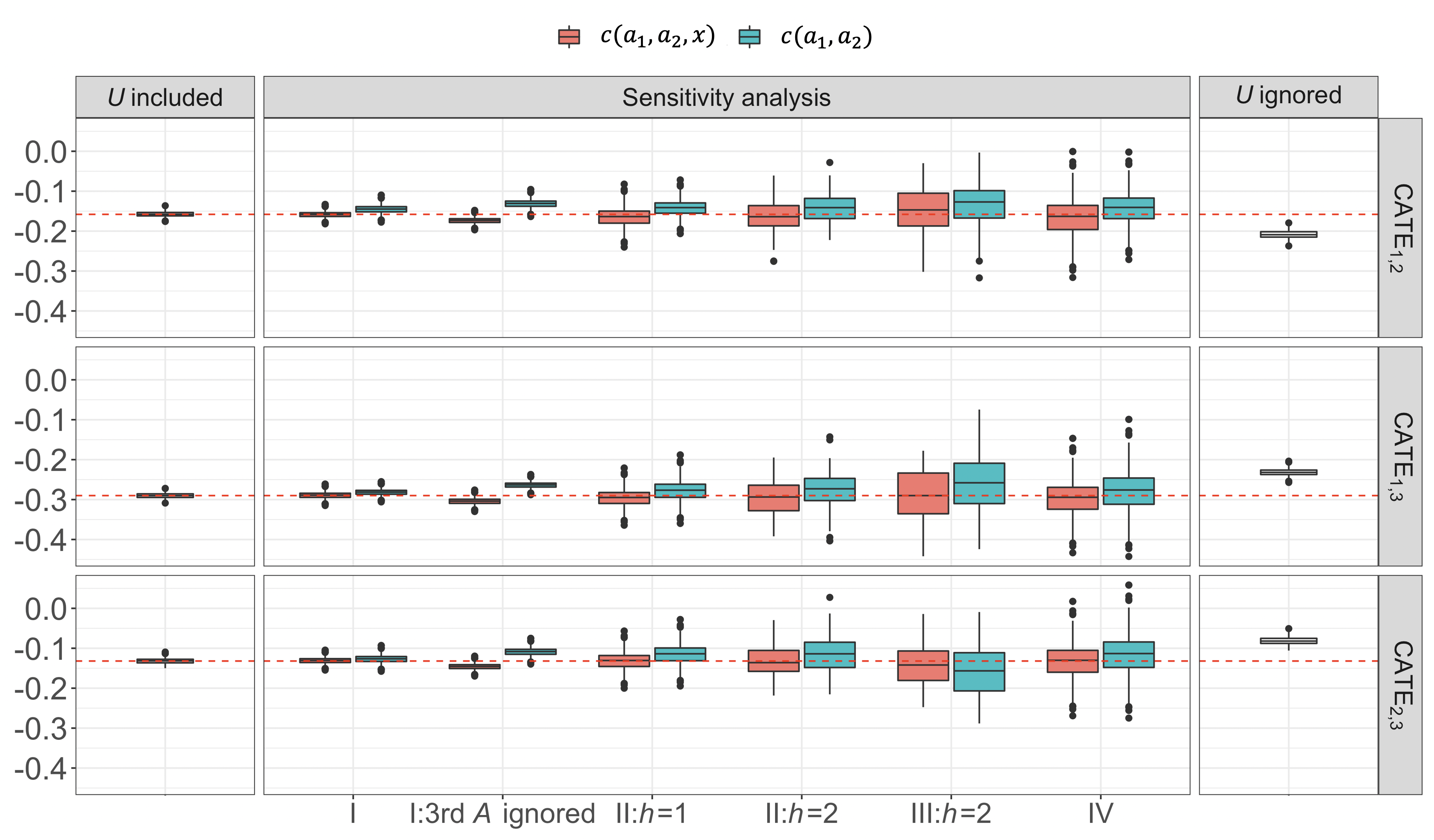}
\caption{Estimates of three pairwise causal effects $\text{CATE}_{1,2}$, $\text{CATE}_{1,3}$ and $\text{CATE}_{2,3}$ among 1000 replications. Two independent confounders, measured $X_1$ and unmeasured $U$, are considered. The data generating models are described in Web Section 1. For sensitivity analysis, strategies (I)-(IV) described in Section 3.1 were used to specify the prior distributions for the confounding functions $c(\cdot)$. For strategy (I),  the scenario ``3rd $A$ ignored'' considers only the $c(\cdot)$ functions involving the pair of treatments for the target CATE, while setting the  $c(\cdot)$ functions involving the third treatment to zero. For strategy (II), both $h=1$ and $h=2$ are considered, representing one and two remaining standard deviation, respectively. The CATE results that could be achieved if $U$ were actually observed and the naive CATE estimators ignoring $U$  are also presented. Red dashed lines mark the true CATE.}
\label{fig:sim_binary_no_interaction}
\end{figure}

\begin{figure}[H] 
\includegraphics[width=0.9\textwidth]{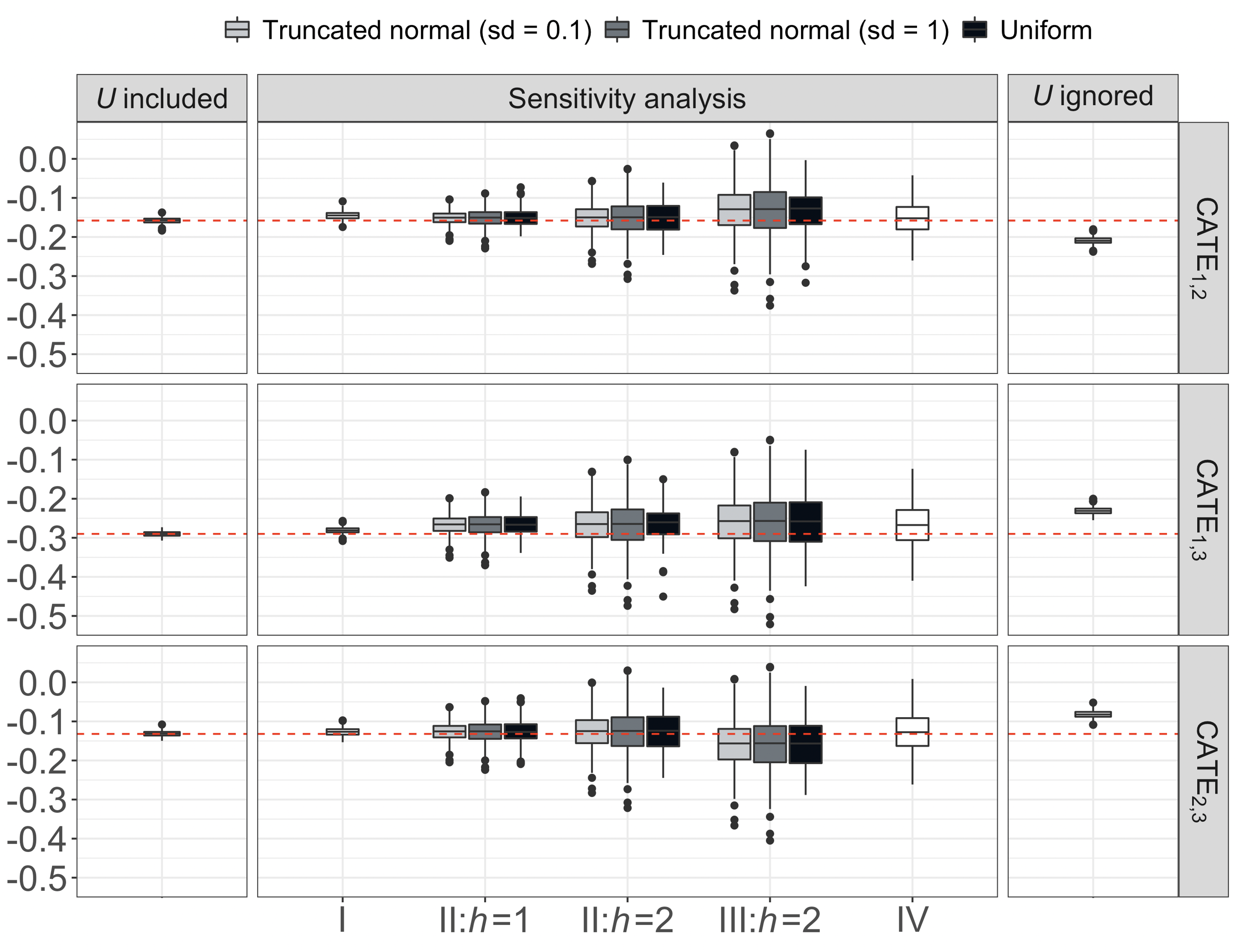}
\caption{Estimates of three pairwise causal treatment effects $CATE_{1,2}$, $CATE_{1,3}$ and $CATE_{2,3}$ among 1000 replications. For sensitivity analysis, strategies (I)-(IV) described in Section 3.1 were used to specify the prior distributions for the confounding functions $c(\cdot)$. For strategy (II), both $h=1$ and $h=2$ are considered, representing one and two remaining standard deviation, respectively. For strategies (II) and (III), both the uniform distribution and the truncated normal distribution with large and small spread were used. The CATE results that could be achieved if $U$ were actually observed and the naive CATE estimators ignoring $U$  are also presented.  Red dashed lines mark the true CATE.}
\label{fig:sim_binary_suppl}
\end{figure}

\begin{figure}[H] 
\includegraphics[width=1\textwidth]{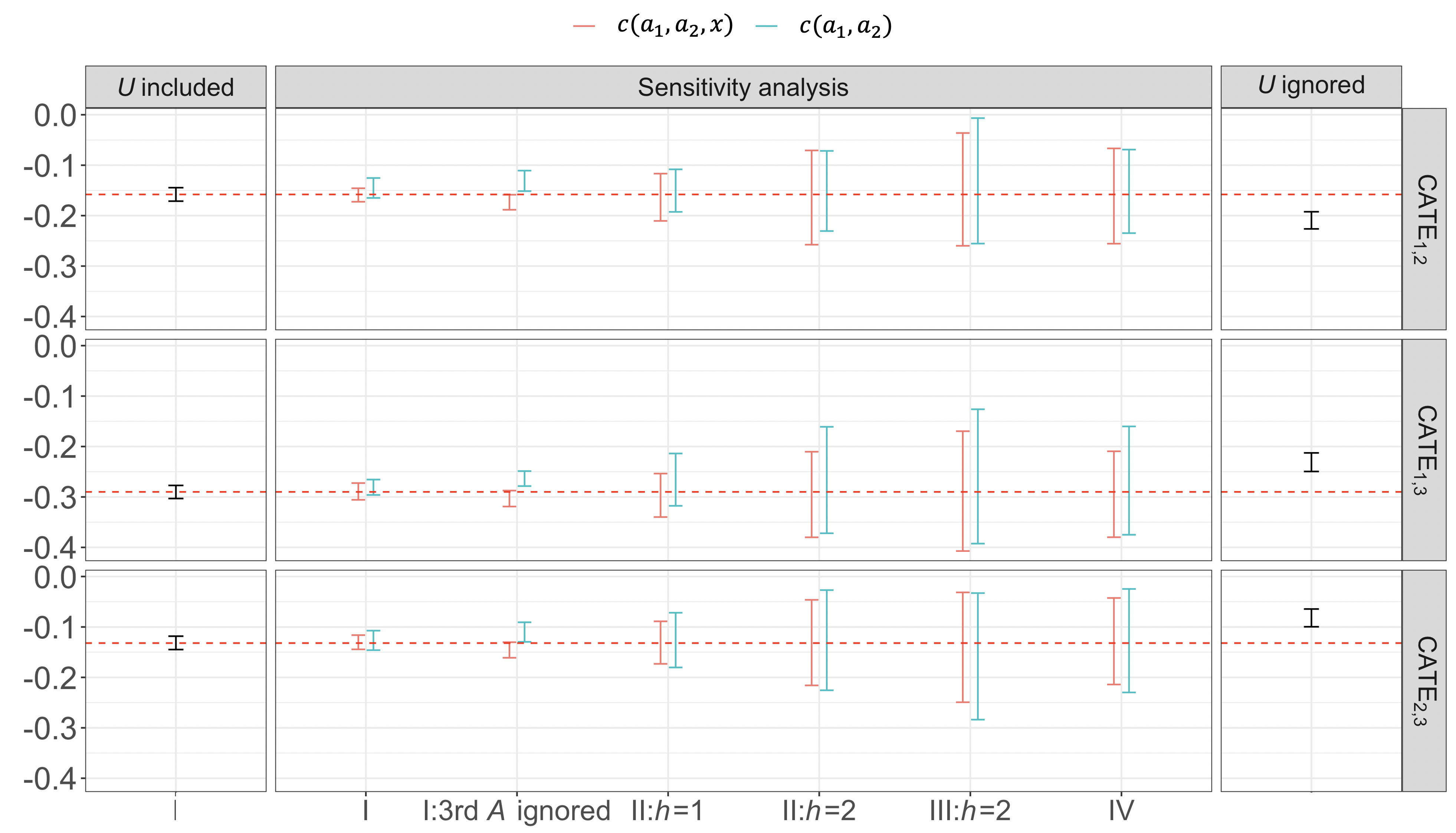}
\caption{Credible intervals for three pairwise causal effects $\text{CATE}_{1,2}$, $\text{CATE}_{1,3}$ and $\text{CATE}_{2,3}$ using a random data replication. For sensitivity analysis, strategies (I)-(IV) described in Section 3.1 were used to specify the prior distributions for the confounding functions $c(\cdot)$. For strategy (II), both $h=1$ and $h=2$ are considered, representing one and two remaining standard deviation, respectively. For strategies (II) and (III), both the uniform distribution and the truncated normal distribution with large and small spread were used. The CATE results that could be achieved if $U$ were actually observed and the naive CATE estimators ignoring $U$  are also presented.  Red dashed lines mark the true CATE.Red dashed lines mark the true CATE.}
\label{fig:sim_binary_ci}
\end{figure}

\begin{figure}[H] 
\includegraphics[width=1\textwidth]{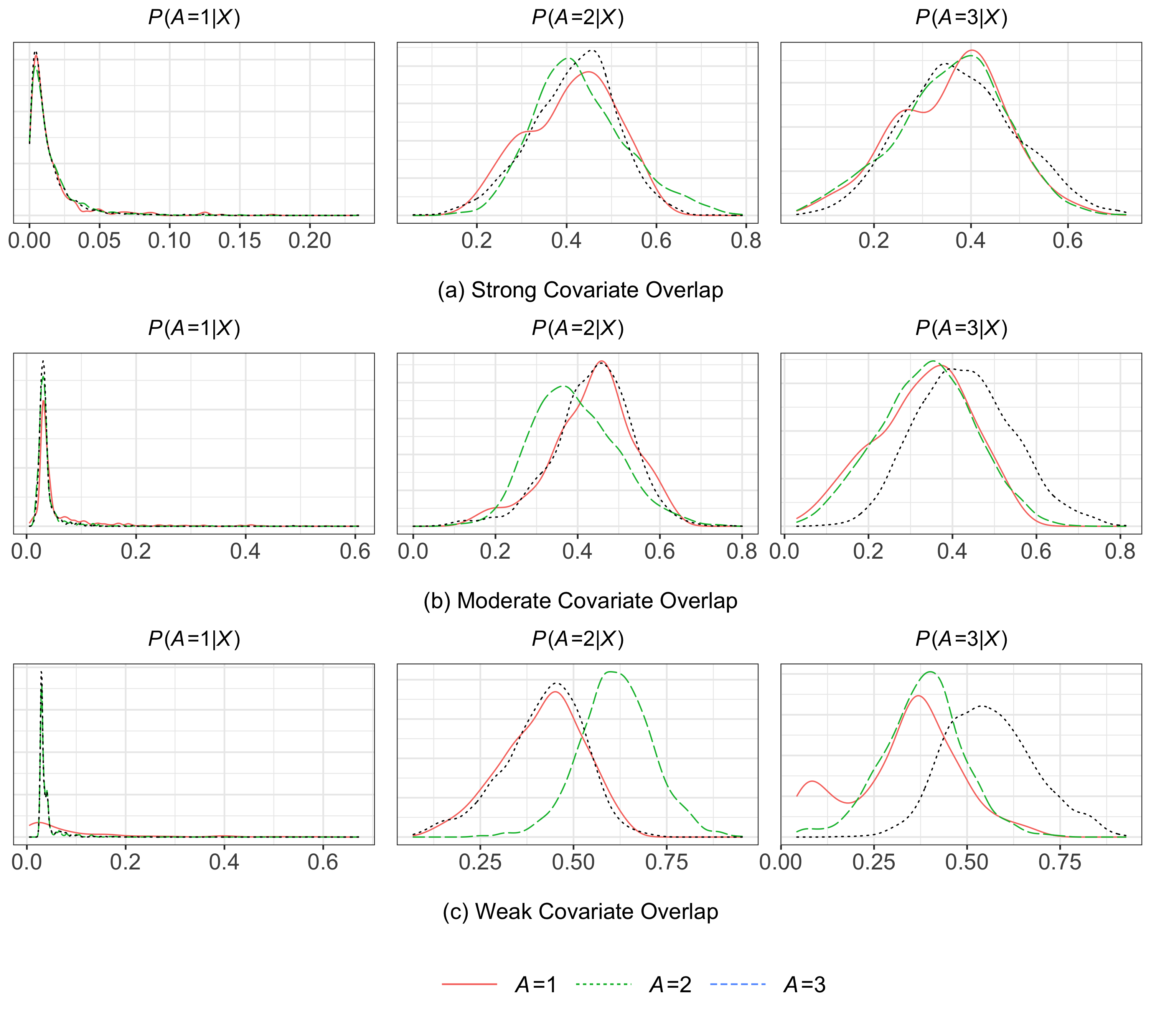}
\caption{Distributions of the true generalized propensity scores (GPS) corresponding to (a) strong, (b) moderate, and (c) weak covariate overlap. Each panel presents the density plots of the GPS for units assigned to a given treatment group. The left panel corresponds to treatment 1, the middle panel treatment 2, and the right panel  treatment 3.} 
\label{fig:overlap_sim}
\end{figure}

\begin{figure}[H] 
\includegraphics[width=1\textwidth]{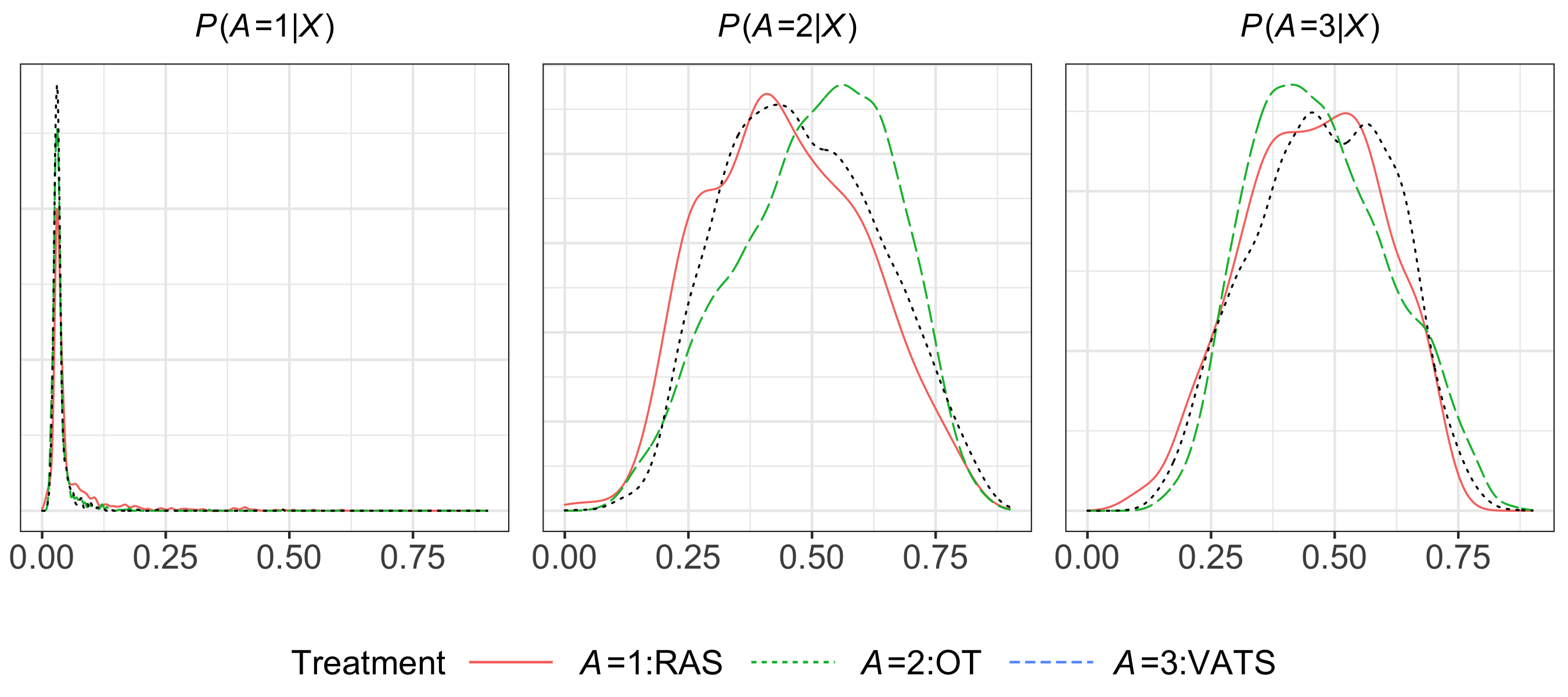}
\caption{Distributions of the posterior mean generalized propensity scores (GPS) for the SEER-Medicare lung cancer data. The GPS were obtained by fitting a BART model for the multinomial outcomes. Each panel presents the density plots of the GPS for units assigned to a given treatment group. The left panel corresponds to treatment 1, the middle panel treatment 2, and the right panel  treatment 3.} 
\end{figure}

\begin{figure}[H] 
\includegraphics[width=0.9\textwidth]{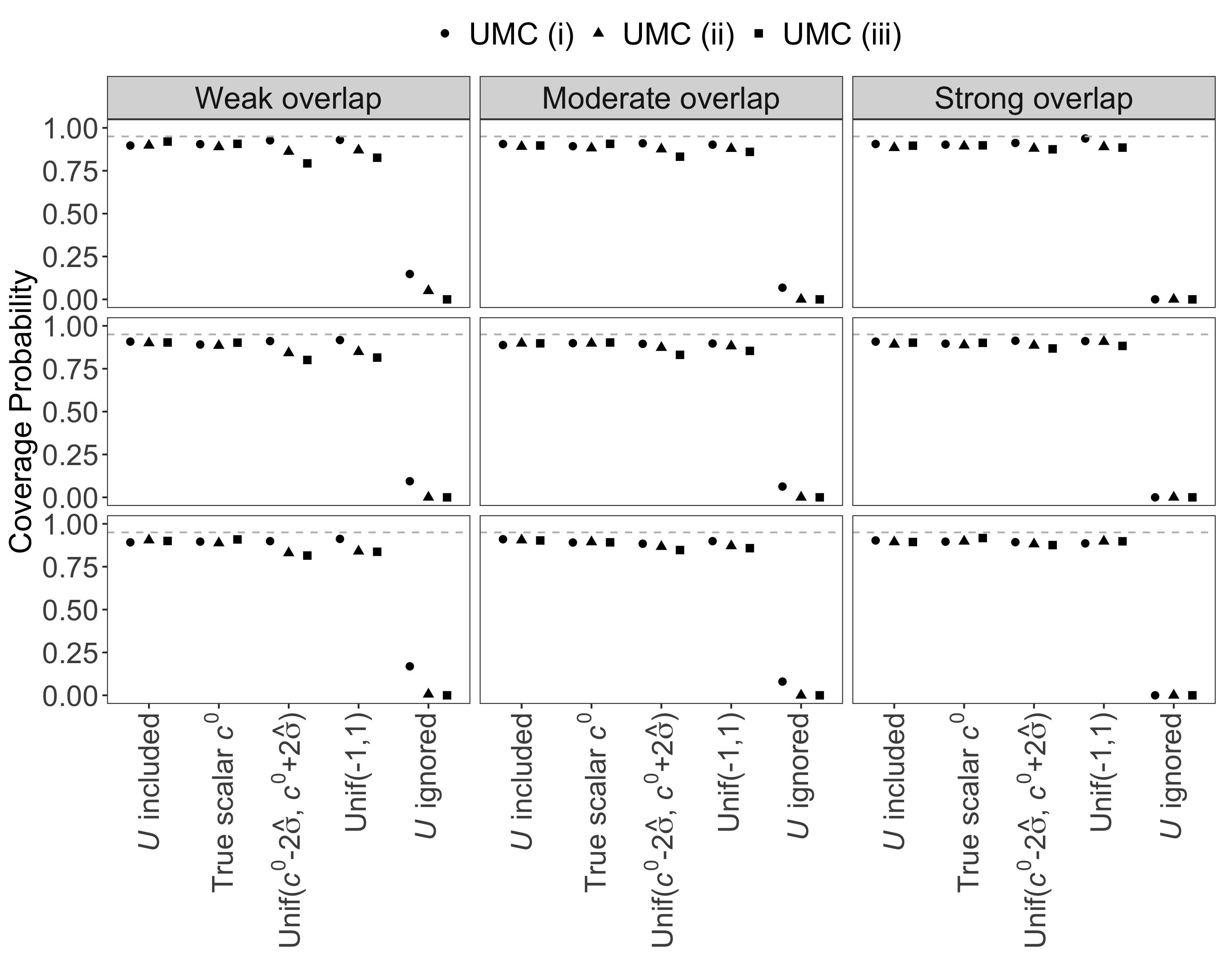}
\caption{The coverage probability of the estimates of three pairwise causal  effects $\text{CATE}_{1,2}$, $\text{CATE}_{1,3}$ and $\text{CATE}_{2,3}$ among 1000 replications, for the sample size $N=1500$ with ratio of units = 1:1:1, three complexity levels of unmeasured confounding, UMC(i), UMC(ii) and UMC(iii), and three levels of covariate overlap, strong, moderate and weak,  as described in Section 3.2.1. For sensitivity analysis, three strategies were used to specify the prior distributions for the confounding functions: 1) true scalar parameters $c^0$, 2) $\mathcal{U}(c^0-2\hat{\sigma}, c^0+2\hat{\sigma})$ bounded within $[-1,1]$, and 3) $\mathcal{U}(-1,1)$. Coverage probabilities of the CATE results that could be achieved if $U$ were actually observed and the naive CATE estimators ignoring $U$  are also presented. Gray dashed lines mark the nominal 95\% coverage probability.} 
\label{fig: sim-complex-CP-1500}
\end{figure}

\begin{figure}[H] 
\centering
\includegraphics[scale=0.45]{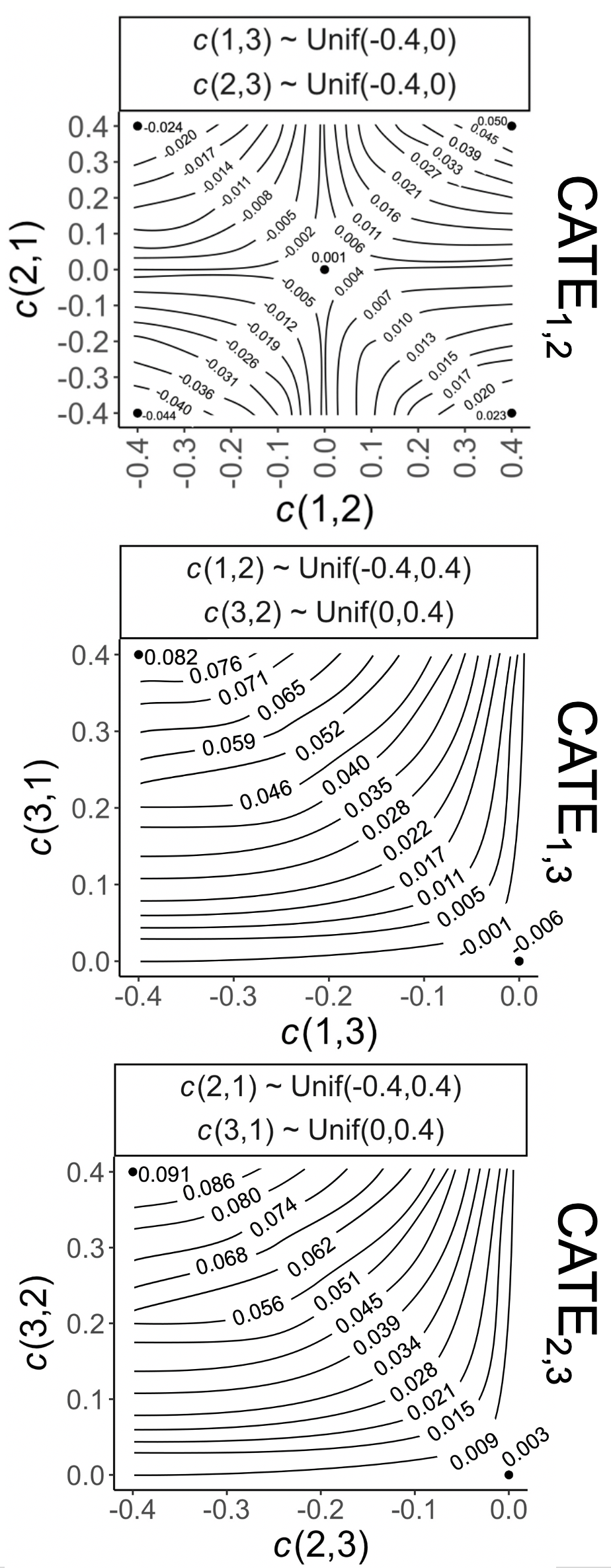}
\caption{Contour plots of the confounding function adjusted treatment effect estimates for RAS versus OT, $CATE_{1,2}$, RAS versus VATS, $CATE_{1,3}$ and OT vs VATS, $CATE_{2,3}$. Four confounding functions are involved in each pairwise treatment effect.  The  black lines in each panel report the adjusted causal effect estimates of $a_j$ versus $a_k$, $\forall j <k \in \{1,2,3\}$ under different pairs of values for  $c(a_j,a_k,x)$ and $c(a_k,a_j,x)$ spaced on a grid, and under the prior distributions for the other two confounding functions. } 
\end{figure}

\section{Sample codes} 
We provide step-by-step sample coding of the illustrative simulation described in Section 3.1 for the scenario where we use the true $c^0$. The proposed method is readily available in the $\R$ package  $\textsf{SAMTx}$. 

\begin{lstlisting}[breaklines=true]
# Simulate the data
library(BART)
sample_size = 1500
# First simulate the treatment indicator A
x1 = rbinom(sample_size, 1, prob=0.4)
u = rbinom(sample_size, 1, prob=0.5)
lp.A = 0.2 * x1 + 0.4 * u + rnorm(sample_size, 0, 0.1)
lp.B = -0.3 * x1 + 0.8 * u + rnorm(sample_size, 0, 0.1)
lp.C = 0.1 * x1 + 0.5 * u + rnorm(sample_size, 0, 0.1)
p.A1 <- exp(lp.A)/(exp(lp.A)+exp(lp.B)+exp(lp.C))
p.A2 <- exp(lp.B)/(exp(lp.A)+exp(lp.B)+exp(lp.C))
p.A3 <- exp(lp.C)/(exp(lp.A)+exp(lp.B)+exp(lp.C))
p.A <- matrix(c(p.A1,p.A2,p.A3),ncol = 3)
A = NULL
for (i in 1:sample_size) { 
  A[i] <- sample(c(1, 2, 3),
                 size = 1,
                 replace = TRUE,
                 prob = p.A[i, ])
}
table(A)
# Then simulate the treatment P(Y(A) = 1|x1, u)
Y1 = -0.8 * x1 - 1.2 * u + 1.5 * u * x1
Y2 = -0.6 * x1 + 0.5 * u + 0.3 * x1 * u
Y3 = 0.3 * x1 + 0.2 * x1 * u + 1.3 * u
Y1 = rbinom(sample_size, 1, exp(Y1)/(1+exp(Y1)))
Y2 = rbinom(sample_size, 1, exp(Y2)/(1+exp(Y2)))
Y3 = rbinom(sample_size, 1, exp(Y3)/(1+exp(Y3)))
dat_truth = cbind(Y1, Y2, Y3, A) # True data for the outcome Y
Yobs <- apply(dat_truth, 1, function(x) # Observed data for the outcome Y
  x[1:3][x[4]]) 
# Simulate the true confounding function c(a1, a2)
n_alpha = 30
alpha = cbind(
  runif(n_alpha, mean(Y1[A ==1])-mean(Y1[A ==2]) - 0.001, mean(Y1[A ==1])-mean(Y1[A ==2]) + 0.001),
  runif(n_alpha, mean(Y2[A ==2])-mean(Y2[A ==1]) - 0.001, mean(Y2[A ==2])-mean(Y2[A ==1]) + 0.001),
  runif(n_alpha, mean(Y2[A ==2])-mean(Y2[A ==3]) - 0.001, mean(Y2[A ==2])-mean(Y2[A ==3]) + 0.001),
  runif(n_alpha, mean(Y1[A ==1])-mean(Y1[A ==3]) - 0.001, mean(Y1[A ==1])-mean(Y1[A ==3]) + 0.001),
  runif(n_alpha, mean(Y3[A ==3])-mean(Y3[A ==1]) - 0.001, mean(Y3[A ==3])-mean(Y3[A ==1]) + 0.001),
  runif(n_alpha, mean(Y3[A ==3])-mean(Y3[A ==2]) - 0.001, mean(Y3[A ==3])-mean(Y3[A ==2]) + 0.001))
y <- Yobs
covariates = as.matrix(cbind(x1, u))
y = as.numeric(y)
A = as.integer(A)
A_unique_length <- length(unique(A))
alpha = as.matrix(alpha)
M1 <- 30 # This is M1
M2 <- 30 # This is M2
nposterior <- 10000 # 10000 posterior samples

# Algorithm 1.1: Fit the multinomial probit BART model to the treatment A 
A_model = mbart2(x.train = covariates, as.integer(as.factor(A)), x.test = covariates, ndpost = nposterior)

# Algorithm 1.1: Estimate the generalized propensity scores for each individual 
p = array(A_model$prob.test[seq(1, nrow(A_model$prob.test), nposterior),], dim = c(M1, 3, length(A)))

# Algorithm 1.2: start to calculate causal effect by M1 * M2 times 
causal_effect_1 = matrix(NA, nrow = M2 * M1, ncol = nposterior)
causal_effect_2 = matrix(NA, nrow = M2 * M1, ncol = nposterior)
causal_effect_3 = matrix(NA, nrow = M2 * M1, ncol = nposterior)
step = 1
train_x = cbind(covariates, A)
for (j in 1:M1) {
  # Algorithm 1.2: Draw M1 generalzied propensity scores from the posterior predictive distribution of the A model for each individual 
  p_draw_1 <- p[j, 1, ]
  p_draw_2 <- p[j, 2, ]
  p_draw_3 <- p[j, 3, ]
  for (m in 1:M2) {
    # Algorithm 1.2: Draw M2 values from the prior distribution of each of the sensitivity paramaters alpha for eacg treatment 
    print(paste("step :", step, "/", M2*M1))
    sort(unique(train_x[, "A"]))
    # Algorithm 1.3: Compute the adjusted outcomes y_corrected for each treatment for each M1M2 draws
    y_corrected = ifelse(
      train_x[, "A"] == sort(unique(train_x[, "A"]))[1],
      y - (unlist(alpha[m, 1]) * p_draw_2  + unlist(alpha[m, 4]) * p_draw_3),
      ifelse(
        train_x[, "A"] == sort(unique(train_x[, "A"]))[2],
        y - (unlist(alpha[m, 2]) * p_draw_1 + unlist(alpha[m, 3]) * p_draw_3),
        y - (unlist(alpha[m, 5]) * p_draw_1 + unlist(alpha[m, 6]) * p_draw_2)
      )
    )
    # Algorithm 1.4: Fit a BART model to each set of M1*M2 sets of observed data with the adjusted outcomes y_corrected
    bart_mod = wbart(x.train = cbind(covariates, A),  y.train = y_corrected,  ndpost = nposterior, printevery = 10000)
    
    predict_1 = pwbart(cbind(covariates, A = sort(unique(A))[1]), bart_mod$treedraws)
    predict_2 = pwbart(cbind(covariates, A = sort(unique(A))[2]), bart_mod$treedraws)
    predict_3 = pwbart(cbind(covariates, A = sort(unique(A))[3]), bart_mod$treedraws)
    causal_effect_1[((m - 1) * M1) + j, ] = rowMeans(predict_1 - predict_2)
    causal_effect_2[((m - 1) * M1) + j, ] = rowMeans(predict_2 - predict_3)
    causal_effect_3[((m - 1) * M1) + j, ] = rowMeans(predict_1 - predict_3)
    step = step + 1
  }
  
}
# Final combined adjusted causal effect
ATE_01_adjusted <- causal_effect_1
ATE_12_adjusted <- causal_effect_2
ATE_02_adjusted <- causal_effect_3
\end{lstlisting}

